\newtheorem{theorem}{Theorem}[section]
\newtheorem{lemma}[theorem]{Lemma}
\newtheorem{corollary}[theorem]{Corollary}
\theoremstyle{definition}
\theoremstyle{remark}
\newtheorem{remark}[theorem]{Remark}
\numberwithin{equation}{section}
\begin{document}

\title{Power dissipation in fractal AC circuits}

\author{Joe P. Chen}\email{jpchen@colgate.edu}
\author{Luke G. Rogers}\email{rogers@math.uconn.edu}
\author{Loren Anderson}\email{and05097@umn.edu}
\author{Ulysses Andrews}\email{ulysses.andrews@uconn.edu}
\author{Antoni Brzoska}\email{antoni.brzoska@uconn.edu}
\author{Aubrey Coffey}\email{acoffey@agnesscott.edu}
\author{Hannah Davis}\email{davi2495@umn.edu}
\author{Lee~Fisher}\email{fisherla1@email.appstate.edu}
\author{Madeline  Hansalik}\email{mahans95@tamu.edu}
\author{Stephen Loew}\email{scloew@coe.edu}
\author{Alexander Teplyaev}\email{teplyaev@uconn.edu}

\urladdr{\url{http://math.colgate.edu/~jpchen/}}
\urladdr{\url{http://mathreu.uconn.edu/person/luke-rogers/}}
\urladdr{\url{http://teplyaev.math.uconn.edu/}}

\thanks{Research supported in part by NSF grants DMS-1106982, DMS-1262929 and DMS-1613025. 
}

\address{Department of Mathematics,
Colgate University,
Hamilton, NY, 13346
USA}
\address{Department of Mathematics,
 University of Connecticut, Storrs CT 06269, USA}

\date{\today}

\begin{abstract}
We extend Feynman's analysis of an infinite ladder circuit to fractal circuits, providing examples in which fractal circuits constructed with purely imaginary impedances can have characteristic impedances with positive real part.  Using (weak) self-similarity of our fractal structures, we provide algorithms for studying the equilibrium distribution of energy on these circuits. This extends the analysis of self-similar 
resistance networks introduced by Fukushima, Kigami, Kusuoka, and more recently studied by Strichartz et al. 
\end{abstract}

\keywords{AC circuit, complex impedances, harmonic functions, analysis on fractals}
\subjclass[2010]{78A02, 28A80, 94C05}
\maketitle

\section{Introduction}\label{sec-intro}
Our goal in this paper is to generalize  Feynman's well-known example~\cite{F} of  an infinite ladder network of capacitors and inductors that exhibits a non-zero real resistance, despite having non-resistive components, to certain fractal networks.   
This extends the analysis of self-similar 
resistance networks introduced by Fukushima, Kigami, Kusuoka, and more recently studied by Strichartz et al 
\cite{akkermans2009physical,
	akkermans2010thermodynamics,
	ben1999not,
	fan2009harmonic,
	fukushima1992dirichlet,
	Kigamibook,
	Str06,
	strichartz2009fractal}. 
We also extend the  concepts of fractafolds and fractal quantum graphs 
\cite{alonso2016energy,
	chen2015spectral,
	lancia2015density,
	strichartz2003fractafolds,
	strichartz2012spectral} 
to the alternating current networks. 
The present paper is a step in the long program of transferring  the 
scalar analysis of the direct electrical currents on fractals  
to the various phenomena of wave propagation, such as the alternating current propagation (for which currently there are few mathematical physics references, 
\cite{akkermans1988theoretical,akkermans2013spontaneous,CT16,kusuoka1998waves}).  
In particular, obtaining the energy measure   
and 
resolvent estimates 
\cite{ionescu2010resolvent,
	rogers2012estimates}
and  studying effects of the magnetic field 
\cite{hinz2013dirac,hinz2015magnetic,hyde2016magnetic}
are among the more immediate goals of this research program. 

Feynman's ladder is an important example for understanding  wave propagation in AC circuits, so we expect our examples to be of interest for studying wave propagation on fractals, which is a topic of ongoing research~\cite{AkkDunnLevy}; in particular, we hope this model may be used to study fractal analogues of the telegraph equation.  In addition, our construction generalizes Kigami's method~\cite{Kigamibook} for obtaining Dirichlet forms on fractal sets as limits of finite resistance networks on graphs that approximate the fractal, as it permits us to consider finite networks of capacitors and inductors rather than resistors, and still obtain a non-trivial limit.  Kigami's method is central to the theory of analysis on fractals, and it was not previously known whether there was an extension of this theory to limits of electrical networks involving components other than resistors, though we learned while writing up this work that some circuits of this type were previously studied in~\cite{Barnsleyetal,dragon} and via continued fraction techniques in~\cite{Lak}.

One of the simplest non-trivial examples of a fractal network supporting an interesting Dirichlet form is the Sierpinski Gasket (SG), see~\cite{Kusuoka85,BarlowPerkins,Kig89}.  The space of such forms is quite complicated: if no symmetry is assumed, then there is a condition due to Sabot~\cite{Sabot} that gives existence and uniqueness, but it is difficult to give an explicit description of such forms (see~\cite{MRT} for one approach to this). Nevertheless, we show in Section~\ref{sec-SG1} that there are no bilaterally symmetric examples of the type we seek on SG.  For this reason we introduce a  new set, the Feynman-Sierpinski Ladder, and also construct two different circuits based on the Hanoi attractor~\cite{Sunic,NT}.  These sets are not self-similar in the usual sense (see, e.g., Chapter 1 of~\cite{Kigamibook}), but are of a type sometimes called \emph{weakly} self-similar.  They fall within the well-known Mauldin-Williams class~\cite{MW}, for which the limits of resistor networks are considered in~\cite{HamblyNyberg,Metz}.  They are described using a Mauldin-Williams type replacement rule, but with a separate factor to describe the rescaling of impedances from level to level.  In the case of the Hanoi attractor, this is related to the fractal quantum graph approach in~\cite{2012hanoi,alonso2016energy}.

Like Feynman's ladder, each of our fractal circuits is built from purely imaginary impedances (inductors and capacitors).  For each we give conditions on the scaling factors that imply the resulting circuit is a filter, meaning that its characteristic impedance (or impedances) have positive real part.  We also give algorithms for computing the equilibrium voltages in the circuit if an AC signal is applied to the boundary points.  Mathematically this latter is a harmonic function, and we give the value of any harmonic function at each point of the circuit using an infinite product of matrices derived from the (weak) self-similar structure on the set.  This may be seen as a harmonic interpolation method, in which the values of the harmonic function on each level of the construction give those on the next level.


\section{Self-similar AC circuits on the Sierpinski  Gasket}\label{sec-SG1}
The existence of non-trivial resistance forms (as defined in~\cite{KigJFA03}) on the Sierpinski Gasket (SG) is well-known, as is the fact that these correspond to limits of physical configurations of resistors on graph approxmations to the fractal.  It is then natural to ask whether there are self-similar configurations of electrical elements for which the impedance has non-zero imaginary part, so that the corresponding  objects should have interesting AC behavior.  Here we show that there are no non-trivial examples with bilateral symmetry, as the self-similarity implies the only possiblities are complex multiples of the usual resistance forms on SG. 

Recall that as a subset of $\mathbb{R}^2$ the SG may be defined to be the (unique non-empty compact) invariant subset of a collection of three contractive maps $F_j(x)=\frac{1}{2}(x-p_j)+p_j$, $j=0,1,2$, where the points $p_j$ are the vertices of an equilateral triangle.  A circuit structure on the SG is self-similar if there are scalings $r_0,r_1,r_2$ such that the resistance  between $F_j(x)$ and $F_j(y)$ in the subcircuit $F_j(\text{SG})$ (i.e.\@ the circuit with the other two cells removed) is $r_j$ times the resistance between $x$ and $y$ in SG.  
\begin{theorem}
Any self-similar bilaterally symmetric AC circuit on the Sierpinski gasket is a constant complex multiple of a purely real self-similar circuit.
\end{theorem}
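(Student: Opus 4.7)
The plan is to translate self-similarity and bilateral symmetry into a polynomial system on the scaling factors and impedance ratio, and then to show that its solutions are all complex multiples of real ones.

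First I parametrize. Bilateral symmetry (the reflection swapping $p_1$ and $p_2$) forces $r_1 = r_2$ and reduces the level-$0$ cell's effective 3-terminal impedance to a bilaterally symmetric network, which in delta form has edges $z_{01} = z_{02} = b$ and $z_{12} = a$. A circuit is thus specified by four complex parameters $(a, b, r_0, r_1)$; since the overall rescaling $(a, b) \mapsto (\lambda a, \lambda b)$ leaves $(r_0, r_1, t := a/b)$ fixed, the theorem reduces to showing that every self-similar solution has $r_0, r_1, t$ all real.

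Second, I derive the renormalization equations. The level-$1$ approximation consists of three $r_j$-scaled copies of the cell glued at the three midpoints $q_0, q_1, q_2$; I eliminate these interior nodes by Schur complement (equivalently by solving Kirchhoff's law at each midpoint) and equate the resulting effective boundary admittance to the level-$0$ admittance. Under bilateral symmetry this decouples into two scalar identities: a symmetric-mode equation (from the boundary data $v_0 = 1$, $v_1 = v_2 = 0$) and an antisymmetric-mode equation (from $v_0 = 0$, $v_1 = -v_2 = 1$). A direct computation shows the symmetric one takes the form
\[
(1 - r_0)(2\alpha + \beta) = r_1(\alpha + \beta), \qquad \alpha = 1/a, \quad \beta = 1/b,
\]
which expresses $t$ as a rational function of $(r_0, r_1)$. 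Substituting back into the antisymmetric-mode equation produces a single polynomial identity $p(r_0, r_1) = 0$ with rational coefficients; one verifies by direct substitution that it is satisfied by the classical resistance scaling $r_0 = r_1 = 3/5$ (which corresponds to $t = 1$, i.e. $a = b$).

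Finally, I analyze $p$ and show that every admissible zero of $p$ in $\mathbb{C}^2$, together with the value of $t$ dictated by the symmetric-mode equation, in fact lies over the known real one-parameter family of bilaterally symmetric resistance forms on SG; hence $r_0, r_1, t \in \mathbb{R}$, and writing $a = \lambda a'$, $b = \lambda b'$ with $\lambda = b$ and $(a', b') = (t, 1) \in \mathbb{R}^2$ produces the desired decomposition. The hard part will be this last step: a priori the complex zero set of $p$ is two-real-dimensional while the real locus is only one-real-dimensional, so ruling out genuinely complex branches will require additional input. I expect this to come either from an explicit factorization of $p$ over $\mathbb{Q}$ that identifies the full variety with (the Zariski closure of) its real locus, or from invoking Sabot's uniqueness-type classification of self-similar Dirichlet forms on SG to show that any complex self-similar structure with bilateral symmetry must already be a $\mathbb{C}^*$-scaling of one in the real family.
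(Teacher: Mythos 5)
Your setup (reduce to showing $r_0$, $r_1$ and the impedance ratio $t=a/b$ are real, via symmetric/antisymmetric mode renormalization equations) matches the second half of the paper's argument, but you have correctly identified the hard part and, as you suspect, it cannot be closed by the means you propose. You have three unknowns $(r_0,r_1,t)$ and only two renormalization identities, so the complex solution set is a one-complex-dimensional (two-real-dimensional) variety; it genuinely contains non-real points lying over complex values of $(r_0,r_1)$, so no factorization of $p$ over $\mathbb{Q}$ can identify it with the closure of the real locus. Nor does Sabot's classification help directly: it concerns positive real resistance forms, and the paper's own remark states explicitly that it is \emph{not} known how to extend his fixed-point argument to a complex projective space so as to exclude complex solutions. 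Treating the $r_j$ as unknowns to be pinned down by the renormalization algebra alone is therefore a dead end.

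The missing idea is a passivity argument that fixes the scaling factors \emph{before} any renormalization algebra is done. Iterating the self-similarity $n$ times shows that the impedances internal to a depth-$n$ cell are those of the whole gasket multiplied by a monomial $r_0^{n_0}r_1^{n_1}r_2^{n_2}$. Every one of these impedances must have non-negative real part, so the argument of $r_j^{n}\,z$ must stay in $[-\pi/2,\pi/2]$ for all $n$, which forces $\arg r_j=0$, i.e.\ each $r_j$ is real and positive. Only then does one turn to the renormalization equation: with $s=r_1/r_0$ real and positive, the conductance ratio on the level-$0$ graph satisfies a quadratic with real coefficients whose discriminant is $(s^2-1)^2+s^2(3-2s)=s^2(s-1)^2+1>0$, so the ratio is real and the circuit is a constant complex multiple of the real resistance circuit. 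Your step two is essentially this last computation; what your proof lacks is the first step, and without it the conclusion is false as a statement about the zero set of your polynomial system.
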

\begin{proof}
Induction on the self-similarity condition shows that the internal impedances on any cell are obtained by multiplying those on the gasket by a factor $r_1^{n_1}r_2^{n_2}r_3^{n_3}$, where the integers $\{n_j\}_{j=1}^3$ are in one-to-one correspondence with the location of the cell.  Since
the resulting impedances must have non-negative real part, it follows that all $r_j$ are real and positive.  The assumption of bilateral symmetry is equivalent to assuming $r_2=r_3=r$.  Calculations in \cite{Sabot} show that if $r_1<3r/2$, then there is a unique Dirichlet form with this self-similar scaling; but that for $r_1\geq r$, the form degenerates onto a strict subset of SG.  This can also be verified by direct calculation as in~\cite{MRT}, see also Chapter~4 of~\cite{Str06}. With $s=r_1/r$ (which must be real and positive), one may compute the conductances on the level-$0$ graph to be $1$ and
\begin{equation*}
	\frac{s^2-1 + \sqrt{(s^2-1)^2+s^2(3-2s)}}{3-2s}
	\end{equation*}
up to a constant factor. It is readily verified that for real positive $s$ this latter expression is real.  Any self-similar circuit with these scalings  is therefore a constant complex multiple of this purely real resistance circuit.
\end{proof}

\begin{remark}
We expect that the above result holds without the assumption of bilateral symmetry, but have not tried to prove this.  
The work of Sabot~\cite{Sabot} mentioned above completely solves the question of existence and uniqueness on SG of self-similar circuits built from real resistances, giving  necessary and sufficient conditions on the scalings $r_j$. However, unlike in the bilaterally symmetric case, the solution is not explicit, but is given as a fixed point of a non-linear map that is contractive in a suitable metric on a cone in a real projective space.  It does not seem to follow directly that there cannot be complex solutions other than complex multiples of the unique real solution.  We believe it might be possible to prove this by considering a map analogous to that studied by Sabot on a suitable complex projective space. 
\end{remark}

\section{Feynman-Sierpinski Ladder Circuit}

By analogy with the classical ladder circuit of Feynman~\cite{F}, we construct a Feynman-Sierpinski Ladder (FSL) following the substitution procedure shown in Figure~\ref{fig-SGLadder}.  Three copies of the triangular element shown at the left are glued (with boundary points identified) and connected inside the triangle using  capacitors of impedance $Z_C$ and  inductors of impedance $Z_L$, as indicated in the central image.  This process is iterated infinitely many times.  The second step of the iteration is shown at the right.  Let $\frac{2}{3}Z$ denote the characteristic impedance, i.e.\ the impedance across two external vertices in the limiting structure; this value is chosen because it would be the effective impedance if there were impedance $Z$ across each edge of the initial triangle.   We show that each choice of capacitance and inductance determines a unique such $Z$, and gives conditions under which $Z$ has postive real part,  meaning that the circuit is a filter.

\begin{figure}[htb]
  \centering
  \def\svgwidth{\columnwidth}
    \resizebox{\textwidth}{!}{
\begingroup%
  \makeatletter%
  \providecommand\color[2][]{%
    \errmessage{(Inkscape) Color is used for the text in Inkscape, but the package 'color.sty' is not loaded}%
    \renewcommand\color[2][]{}%
  }%
  \providecommand\transparent[1]{%
    \errmessage{(Inkscape) Transparency is used (non-zero) for the text in Inkscape, but the package 'transparent.sty' is not loaded}%
    \renewcommand\transparent[1]{}%
  }%
  \providecommand\rotatebox[2]{#2}%
  \ifx\svgwidth\undefined%
    \setlength{\unitlength}{583.96615175bp}%
    \ifx\svgscale\undefined%
      \relax%
    \else%
      \setlength{\unitlength}{\unitlength * \real{\svgscale}}%
    \fi%
  \else%
    \setlength{\unitlength}{\svgwidth}%
  \fi%
  \global\let\svgwidth\undefined%
  \global\let\svgscale\undefined%
  \makeatother%
  \begin{picture}(1,0.28481908)%
    \put(0,0){\includegraphics[width=\unitlength,page=1]{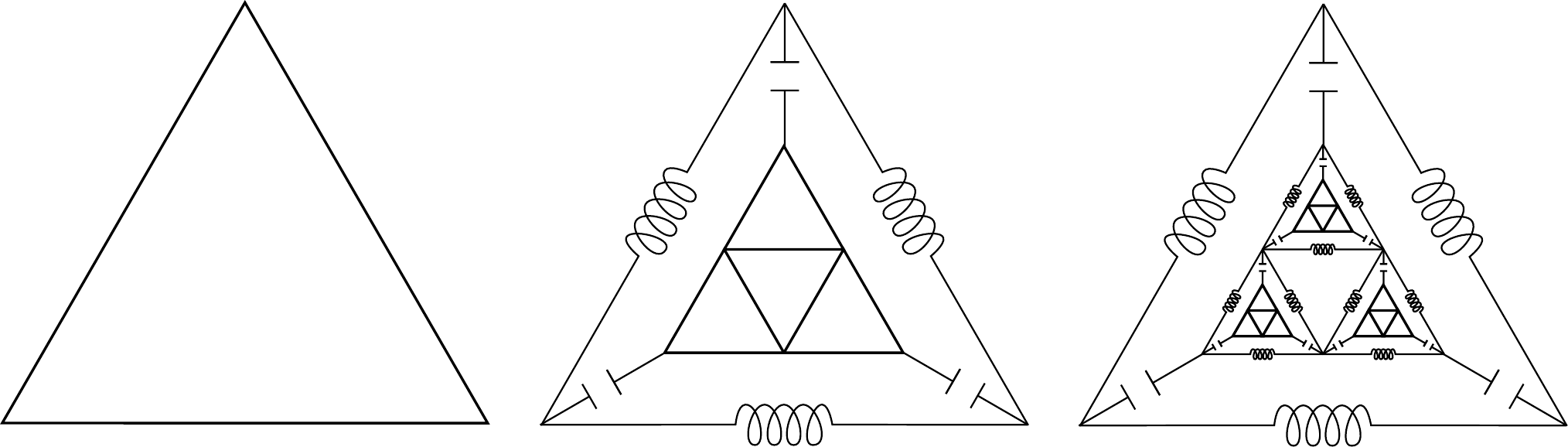}}%
    \put(0.02884536,0.03){\color[rgb]{0,0,0}\makebox(0,0)[lb]{\smash{$p_1$}}}%
    \put(0.148,0.248){\color[rgb]{0,0,0}\makebox(0,0)[lb]{\smash{$p_0$}}}%
    \put(0.27,0.03){\color[rgb]{0,0,0}\makebox(0,0)[lb]{\smash{$p_2$}}}%
    \put(0.42475872,0.04){\color[rgb]{0,0,0}\makebox(0,0)[lb]{\smash{$q_1$}}}%
    \put(0.50969515,0.19){\color[rgb]{0,0,0}\makebox(0,0)[lb]{\smash{$q_0$}}}%
    \put(0.56,0.04){\color[rgb]{0,0,0}\makebox(0,0)[lb]{\smash{$q_2$}}}%
    \put(-0.00403326,-0.16695623){\color[rgb]{0,0,0}\makebox(0,0)[lt]{\begin{minipage}{0.77949723\unitlength}\raggedright \end{minipage}}}%
  \end{picture}%
\endgroup%
}
\caption{Feynman-Sierpinski Ladder circuit construction.}\label{fig-SGLadder}
\end{figure} 

\begin{theorem}\label{sgladderfilterthm}
If $Z_C$ and $Z_L$ are fixed, and $\frac{2}{3}Z$ is the impedance between any two vertices of the limiting structure, then $Z$ satisfies
\begin{equation}\label{sgladdereqn1}
	\frac{1}{Z} = \frac{1}{Z_L} + \frac{1}{3Z_C+5Z/3}.
	\end{equation}
If the capacitances are $C$, the inductances $L$, and the applied AC signal has frequency $\omega$  (so that $Z_C=\frac{1}{i\omega C}$ and $Z_L=i\omega L$), then there is a solution in which $Z$ has positive real part (i.e. the circuit is a filter) precisely when
\begin{gather}\label{sgladderfiltercondit}
	9(4 - \sqrt{15 })< 2\omega^2 LC < 9(4 + \sqrt{15}),\\
\intertext{in which case the impedance is}
Z =\frac{1}{10\omega C} \biggl(  (9+ 2\omega^2 LC)i + \sqrt{ 144\omega ^2 LC -  4(\omega ^2 LC)^2  -  81} \biggr). \label{sgladderimped}
\end{gather}
Outside the frequency range \eqref{sgladderfiltercondit}, $Z$ is purely imaginary and is given by
\begin{equation*}
	Z=\begin{cases}
	\frac i{10\omega C}\left(2\omega^2LC +9- \sqrt{4(\omega^2LC)^2+81-144\omega^2LC}\right) &\text{ if }\quad 2\omega^2LC<9(4-\sqrt{15}),\\
	\frac i{10\omega C}\left(2\omega^2LC +9+ \sqrt{4(\omega^2LC)^2+81-144\omega^2LC}\right) &\text{ if  }\quad 2\omega^2LC> 9(4+\sqrt{15}).
	\end{cases}
	\end{equation*}
\end{theorem}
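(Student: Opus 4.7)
The plan is to use the weak self-similarity of the Feynman-Sierpinski Ladder to obtain the scalar self-consistency equation \eqref{sgladdereqn1}, and then, under the substitutions $Z_C = 1/(i\omega C)$ and $Z_L = i\omega L$, analyze the resulting quadratic in $Z$. For \eqref{sgladdereqn1}, each of the three sub-triangles in the central image of Figure~\ref{fig-SGLadder} is, by construction, a scaled copy of the whole FSL, so by definition of the characteristic impedance the impedance measured between any two external vertices of that sub-cell (with only that sub-cell present) equals $\frac{2}{3}Z$. The threefold symmetry of the construction forces all three pairwise external impedances of a sub-cell to coincide, so at its three external terminals each sub-cell is indistinguishable from a Y-network whose three arms each have impedance $Z/3$. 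Replacing each sub-cell by its Y-equivalent reduces the central picture to an explicit finite network built from $Z/3$, $Z_C$, and $Z_L$, and a direct series-parallel reduction between two external vertices expresses the effective impedance as a rational function of $Z$, $Z_L$, $Z_C$; setting this equal to $\frac{2}{3}Z$ produces \eqref{sgladdereqn1}.

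Next, I substitute $Z_C = 1/(i\omega C)$ and $Z_L = i\omega L$ into \eqref{sgladdereqn1} and clear denominators, which produces the quadratic
\[5Z^2 - i\left(2\omega L + \frac{9}{\omega C}\right)Z - \frac{9L}{C} = 0.\]
Its discriminant is $-4\omega^2 L^2 + 144 L/C - 81/(\omega^2 C^2)$; after multiplying by $\omega^2 C^2$ the sign is governed by the quadratic $-4x^2 + 144x - 81$ in $x = \omega^2 LC$, whose roots are $x = 9(4\pm\sqrt{15})/2$, so the discriminant is positive precisely on the interval \eqref{sgladderfiltercondit}. In that filter regime the $+$ branch of the quadratic formula yields a $Z$ with positive imaginary part $(9+2\omega^2 LC)/(10\omega C)$ and positive real part $\frac{1}{10\omega C}\sqrt{144\omega^2 LC - 4(\omega^2 LC)^2 - 81}$; collecting factors of $1/(10\omega C)$ gives \eqref{sgladderimped}, and the $-$ branch is discarded because it gives negative real part.

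Outside the interval \eqref{sgladderfiltercondit} the discriminant is negative, so both roots of the quadratic are purely imaginary; the correct branch in each sub-interval is pinned down by continuity with the physical low-frequency behavior $Z\sim i\omega L$ as $\omega\to 0$ and with the corresponding limit as $\omega\to\infty$, producing the two piecewise formulas. The main obstacle will be the first step: one must verify that each sub-cell can legitimately be collapsed into a Y with arms $Z/3$ for the purposes of this computation, and then perform the finite-network reduction carefully enough to recover exactly the right side of \eqref{sgladdereqn1}. Once \eqref{sgladdereqn1} is in hand, everything else is routine algebra and branch-tracking.
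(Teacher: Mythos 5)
Your proposal is correct and follows essentially the same route as the paper: self-similarity reduces each inner sub-cell to an equivalent symmetric three-terminal network, standard network transforms yield \eqref{sgladdereqn1}, and then the discriminant of the resulting quadratic together with the limits $\omega\to0$ and $\omega\to\infty$ gives the filter condition and the branch selection. The only quibble is that the reduction of the three glued sub-cells is not purely series-parallel --- one needs a $\Delta$--$Y$ and a $Y$--$\Delta$ transform, which is exactly how the paper proceeds via the $5Z/3$ triangle equivalent of the level-one gasket --- but this does not affect the validity of your argument.
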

\begin{proof}
In the central image in Figure~\ref{fig-SGLadder} the internal structure made from three triangles with sides of impedance $Z$ is the first step in constructing a Sierpinski Gasket. It is well-known that this is equivalent to a triangle with edge impedances $5Z/3$ or an inverted Y-circuit with edge impedances $5Z/9$. Including the capacitors gives an inverted Y-circuit of edge impedance $Z_C+5Z/9$, which is equivalent to a triangular circuit of edge impedance $3Z_C+5Z/3$.  Each edge is in parallel with an inductor, so the final impedance of the edge is as stated in equation~\eqref{sgladdereqn1}.

Formally solving~\eqref{sgladdereqn1}  produces~\eqref{sgladderimped}  with both positive and negative branches of the square root. The condition~\eqref{sgladderfiltercondit} arises from requiring the quantity under the square root to be positive, as is necessary for the circuit to be a filter. In order to determine the correct branches for the other regions, we consider the cases $\omega\to0$ and $\omega\to\infty$. In the former $Z\to0$ because the inductors have vanishing impedance, while in the latter the capacitors have vanishing impedance, and one may check from~\eqref{sgladdereqn1} that $Z\to\frac{2}{5}Z_{L}$. 
\end{proof}

\subsection*{Convergence of finite approximations}

It is known~\cite{Yoon} in the case of Feynman's ladder that the characteristic impedances of the sequence of finite approximations to the ladder fail to converge to the impedance of the infinite circuit. Introducing a small positive resistance $\epsilon$ in series with each of  the capacitors and inductors gives a sequence of scale $N$ approximating circuits for which the characteristic impedances  $Z_{N,\epsilon}$ do converge as $N\to\infty$, and moreover, the regularized limit $\lim_{\epsilon\to0+}\lim_{N\to\infty}Z_{N,\epsilon}$ is the impedance of the infinite ladder. The following theorem establishes the same result for the Feynman-Sierpinski Ladder.  Its proof is closely related to results in~\cite{Barnsleyetal}.

\begin{figure}[htb]
  \centering
  \def\svgwidth{\columnwidth}
    \resizebox{\textwidth}{!}{\includegraphics{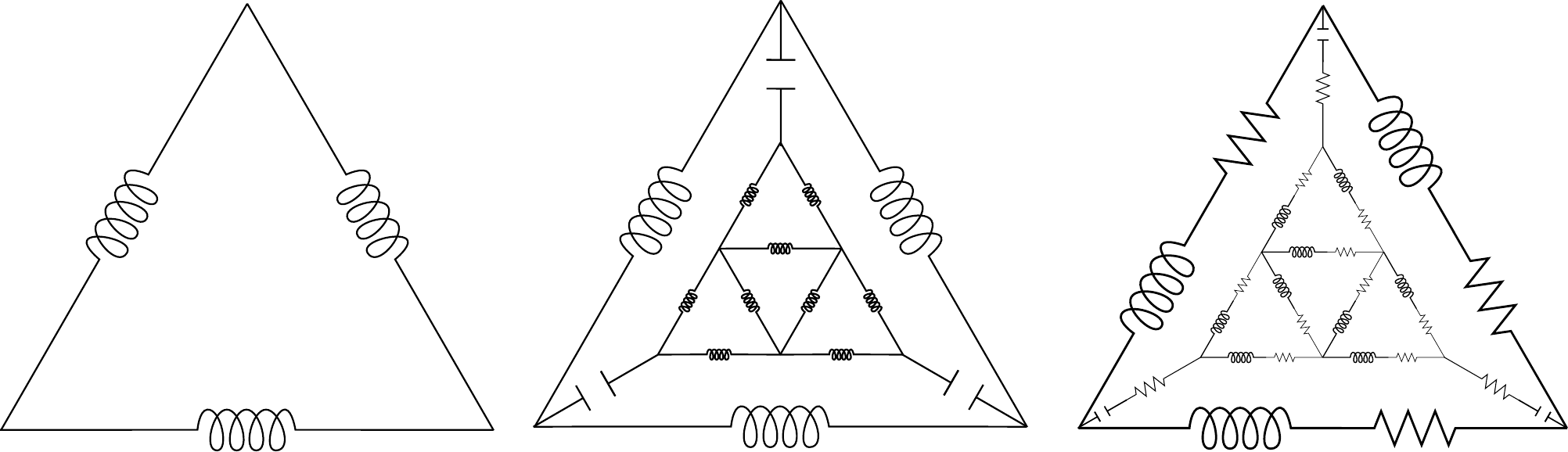}}
\caption{Left and center: two finite approximations to the Feynman-Sierpinski Ladder. Right: the center approximation with resistances in series.}
\label{fig-SGLadderNsteps}
\end{figure} 

\begin{theorem}
Consider the Feynman-Sierpinski Ladder under a filter condition, and let $\frac{2}{3}Z$ be the impedance across two nodes.
\begin{enumerate}[(a)]
\item Let $\frac{2}{3}Z_N$ be the  impedance across two external vertices of the $N^{\text{th}}$ approximation of this Feynman-Sierpinski Ladder, so the left and central diagrams in Figure~\ref{fig-SGLadderNsteps} have impedances $\frac{2}{3}Z_0$ and $\frac{2}{3}Z_1$.  Then the sequence $Z_N$ is not convergent.
\item  For $\epsilon>0$, let $\frac{2}{3}Z_{N,\epsilon}$ be the impedance of the  $N^{\text{th}}$ approximation of the Feynman-Sierpinski Ladder in which each capacitor and inductor is in series with a resistor of resistance $\epsilon$. (The diagram for $N=1$ is on the right of Figure~\ref{fig-SGLadderNsteps}.)  Then  $\lim_{N\to\infty} Z_{N,\epsilon}$ exists for all $\epsilon\in(0,\infty)$ and $\lim_{\epsilon\to0+}\lim_{N\to\infty}Z_{N,\epsilon}=Z$.
\end{enumerate}
\end{theorem}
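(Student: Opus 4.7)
The plan for part (a) is to isolate the self-similar recursion $Z_{N+1}=f(Z_N)$, where
\[
f(W) \;=\; \frac{Z_L(3Z_C + 5W/3)}{Z_L + 3Z_C + 5W/3}
\]
is the one-step substitution rule extracted from the proof of Theorem~\ref{sgladderfilterthm}. A direct substitution shows that $f$ maps $i\mathbb{R}$ into itself whenever $Z_L,Z_C\in i\mathbb{R}$ (the numerator becomes real and the denominator purely imaginary), so by induction $Z_N$ is purely imaginary for every $N$, since the level-$0$ base impedance $Z_0$ is itself a finite combination of purely imaginary components. If one had $Z_N\to Z_\infty\in\mathbb{C}$, continuity of $f$ would force $Z_\infty$ to be a fixed point of $f$; but the fixed points of $f$ are exactly the two solutions of~\eqref{sgladdereqn1}, and under the filter condition~\eqref{sgladderfiltercondit} formula~\eqref{sgladderimped} shows that both have nonzero real part. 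This contradicts $Z_\infty\in i\mathbb{R}$, proving (a).

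For part (b), define $f_\epsilon$ by replacing $Z_L\mapsto Z_L+\epsilon$ and $Z_C\mapsto Z_C+\epsilon$ in $f$, so that $Z_{N+1,\epsilon}=f_\epsilon(Z_{N,\epsilon})$. The aim is to show that $f_\epsilon$ is a strict contraction of the open right half-plane $H=\{\mathrm{Re}\,W>0\}$ in the Poincar\'e hyperbolic metric, yielding a unique attracting fixed point $Z_\epsilon\in H$ for each $\epsilon>0$. Each component now has strictly positive real part, so series and parallel combinations send $\overline{H}$ into $H$; in particular a direct calculation gives $\mathrm{Re}\,f_\epsilon(iy)>0$ for every $y\in\mathbb{R}$. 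Since $f_\epsilon$ is a M\"obius map with $f_\epsilon(\overline{H})\subset H$, the Schwarz--Pick lemma yields strict hyperbolic contraction, and hence $Z_{N,\epsilon}\to Z_\epsilon$ for any starting point in $H$; the initial impedance $Z_{0,\epsilon}$ lies in $H$ because the level-$0$ circuit is a finite network of positive-real-part components.

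To finish, observe that $Z_\epsilon$ is the root with positive real part of the quadratic obtained from~\eqref{sgladdereqn1} under the substitutions $Z_L\mapsto Z_L+\epsilon$, $Z_C\mapsto Z_C+\epsilon$. Its coefficients are analytic in $\epsilon$, and the two roots are distinct for all small $\epsilon\ge 0$ under the strict filter condition, so $Z_\epsilon$ is continuous in $\epsilon$ at $0$ and converges to the positive-real-part branch of~\eqref{sgladderimped}, namely $Z$. The step I expect to be the main obstacle is the verification of strict hyperbolic contraction for $f_\epsilon$: specifically, establishing the inclusion $f_\epsilon(\overline{H})\subset H$ together with enough control at infinity that Schwarz--Pick genuinely governs the orbit $\{Z_{N,\epsilon}\}$ rather than merely orbits inside a hyperbolic compactum. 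A cleaner fallback, should the contraction argument prove delicate, is to compute the multipliers $f_\epsilon'$ at the two fixed points, confirm that for $\epsilon>0$ exactly one lies strictly inside the unit disk and one strictly outside (a perturbation of the elliptic regime of part (a)), and then argue directly that $H$ lies in the basin of attraction of the interior one.
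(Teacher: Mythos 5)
Your argument is correct, and it takes a genuinely different route from the paper's in both parts. For (a), the paper computes the multiplier of the fractional linear map at its fixed point, as in~\eqref{FLTderiv}, shows $|F'(Z)|=1$ under the filter condition, and invokes the classification of M\"obius dynamics to rule out convergence; you instead observe that $F$ preserves the extended imaginary axis, so every $Z_N$ is purely imaginary, while under~\eqref{sgladderfiltercondit} both roots of~\eqref{sgladdereqn1} have nonzero real part by~\eqref{sgladderimped}, so no finite limit is possible. This is softer and avoids the derivative computation entirely; the only detail worth adding is that a purely imaginary limit coinciding with the pole of $F$ is also excluded, since then $Z_{N+1}=F(Z_N)\to\infty$. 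For (b), the paper proves $|F_\epsilon'(Z_\epsilon)|<1$ by a Cayley-transform computation, showing the real part of~\eqref{sgladderconvergenceeqn} never vanishes and is positive for large $\epsilon$, then using connectedness in $\epsilon$; your primary route is the inclusion $f_\epsilon(\overline H)\subset H$ together with Schwarz--Pick. The obstacle you flag (control at infinity) is genuine but easily resolved: $f_\epsilon(\infty)=Z_L+\epsilon\in H$ and the pole of $f_\epsilon$ has real part $-12\epsilon/5<0$, so the image under $f_\epsilon$ of the closed half-plane in the Riemann sphere is a compact disk contained in $H$, hence of finite hyperbolic diameter; the orbit enters this disk after one step, the Schwarz--Pick contraction ratio is bounded away from $1$ on it, and the iterates converge to the unique fixed point in $H$. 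This buys you existence, uniqueness in $H$, and global attraction from $\overline H$ without identifying which explicit root is attracting, whereas the paper's computation identifies the multiplier exactly; your stated fallback (perturbing the multipliers off the elliptic case of part (a)) is essentially the paper's argument. Both routes conclude with the same continuity-in-$\epsilon$ statement for the positive-real-part branch of the quadratic, so the final limit $\lim_{\epsilon\to0^+}Z_\epsilon=Z$ is handled identically.
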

\begin{proof}
Observe that $Z_{1}$ is obtained from $Z_0$ by applying a fractional linear transformation (FLT) which we denote $F$. From the argument leading to~\eqref{sgladdereqn1} we see that
\begin{equation*}
	Z_{1} =  \biggl( \frac{1}{Z_L} + \frac{1}{3Z_C+5Z_0/3}\biggr)^{-1}
	=\frac{ Z_L \bigl( 3Z_C + 5Z_0/3\bigr) } { Z_L + 3Z_C + 5Z_0/3 }
	= \frac{5Z_L Z_0 + 9Z_L Z_C } { 5Z_0 + 3Z_L + 9 Z_C  }  = F(Z_0),
	\end{equation*}
where $F$ depends on $Z_L$ and $Z_C$.  However, the same argument shows that  if $Z_N=G(Z_0)$ then $Z_{N+1}=G(F(Z_0))$, so by induction $Z_N=F^{\circ N}(Z_0)$, where $F^{\circ N}$ denotes $N$-fold composition.

FLTs have very simple dynamics.  Notice that if $F$ has two fixed points $Z_{\pm}$ (i.e.\ there are two  solutions to the quadratic $Z=F(Z)$), and $H$ is the  FLT taking $Z_+$ to $0$ and $Z_-$ to $\infty$, then the conjugated transformation $H\circ F\circ H^{-1}$ is an FLT with fixed points at $0$ and $\infty$. By a local holomorphic change of coordinates, this FLT can be made into a linear map, i.e.\ multiplication by a fixed complex number. This number is obviously the derivative $\bigl(H\circ F\circ H^{-1}\bigr)'(0)=F'(Z_+)$.  Thus we see that if $|F'(Z_+)|<1$, then $Z_+$ attracts all points except $Z_-$; and if $|F(Z_+)|>1$, then $Z_-$ attracts all points except $Z_+$; and if $|F'(Z_+)|=1$, neither of $Z_\pm$ attracts any other orbit.  

Since we assumed that the Feynman-Sierpinski Ladder was a filter (i.e.\@~\eqref{sgladderfiltercondit} holds), its FLT has two fixed points.  It remains to compute the derivative at the relevant one, which, as in~\eqref{sgladderimped} is
\begin{equation}\label{sgladderimpedinZCZL}
	Z  = \frac{1}{10} \Bigl( -9Z_C+ 2Z_L +  \sqrt{(9Z_C+8Z_L)^2  - 60Z_L^2 } \Bigr).
\end{equation}
Then
\begin{align}
	F'(Z) = \frac{15Z_L^2} {(5Z + 3Z_L +9Z_C)^2} 
	&= \frac{ (9Z_C + 8Z_L)^2 - \bigl( (9Z_C+8Z_L)^2 - 60 Z_L^2\bigr) }{ \bigl( (9Z_C + 8Z_L)+ \sqrt{(9Z_C+8Z_L)^2  - 60Z_L^2 } \bigr)^2}\notag\\
	&= \frac{ 9Z_C + 8Z_L - \sqrt{(9Z_C+8Z_L)^2- 60Z_L^2 }}{ 9Z_C + 8Z_L + \sqrt{(9Z_C+8Z_L)^2- 60Z_L^2 }}. \label{FLTderiv}
	\end{align}
In terms of the capacitance, inductance and frequency as in Theorem~\ref{sgladderfilterthm}, we then have under the filter condition~\eqref{sgladderfiltercondit} that
\begin{equation*}
	F'(Z) = \frac{ (9- 8\omega^2 LC)i - \sqrt{ 144\omega ^2 LC -  4(\omega ^2 LC)^2  -  81} } { (9- 8\omega^2 LC)i + \sqrt{ 144\omega ^2 LC -  4(\omega ^2 LC)^2  -  81}},
	\end{equation*}
so $|F'(Z)|=1$ because the quantity under the root is real.  This and our discussion of the dynamics establishes that $Z_N= F^{\circ N}(Z_0)$ does not converge.

To instead consider $Z_{N,\epsilon}$ one can increment $Z_C$ and $Z_L$ by $\epsilon\in(0,\infty)$ in~\eqref{FLTderiv}; call the perturbed map $F_\epsilon$ and its fixed point $Z_\epsilon$. Note that the square root was chosen in~\eqref{sgladderimped} to be the physical one in which the circuit impedance also has positive real part, so it is the branch that is continuous in $\mathbb{C}\setminus(-\infty,0]$.  We wish to show that $\bigl|F'_\epsilon(Z_\epsilon)\bigr|<1$.  It is convenient to conformally map the unit disc to the right half plane and instead show the equivalent statement that the real part of
\begin{equation}\label{sgladderconvergenceeqn}
	\frac{ 1+F'_\epsilon(Z_\epsilon) }{1- F'_\epsilon(Z_\epsilon) } 
	=\frac{9Z_C + 8Z_L+17\epsilon} { \sqrt{(9Z_C + 8Z_L+17\epsilon)^2 -60(Z_L+\epsilon)^2 } }.
	\end{equation}
is always positive.
Observe in particular that as $\epsilon\to\infty$ on the positive real axis the expression~\eqref{sgladderconvergenceeqn} tends to $17/\sqrt{17^2-60}$, which has positive real part.  By connectedness of $(0,\infty)$ and continuity of~\eqref{sgladderconvergenceeqn} in $\epsilon$, it then suffices to show that the real part of~\eqref{sgladderconvergenceeqn} is never zero.  Note that continuity in $\epsilon$ follows from our choice of square root and the fact that the expression under the root always has positive real part from the filter condition and positivity of $\epsilon$.

If~\eqref{sgladderconvergenceeqn} were to have vanishing real part, then its square would be negative and real, as would the reciprocal of the square, which may be written as
\begin{equation*}
	1- \frac{60(Z_L+\epsilon)^2}{(9Z_C + 8Z_L+17\epsilon)^2 }.
	\end{equation*}
But then $(Z_L+\epsilon)/(9Z_C + 8Z_L+17\epsilon)\in\mathbb{R}$, whence so is $(Z_L+\epsilon)(\overline{9Z_C + 8Z_L+17\epsilon})$.  The imaginary component of the latter is $\epsilon(9Z_L-9Z_C)$, so this is only possible if $\epsilon=0$ or $Z_L=Z_C$.  However $Z_L=i\omega L$ with $L>0$ and $Z_C=1/i\omega C$ with $C>0$, so the latter case does not occur.  We conclude that if $\epsilon\in(0,\infty)$ then  $|F'_\epsilon(Z_\epsilon)|<1$, so it is an attractive fixed point and $\lim_N Z_{N,\epsilon}=\lim_N F_\epsilon^{\circ N}(Z_{0,\epsilon})=Z_\epsilon$. Moreover, $\lim_{\epsilon\to0^+} \lim_{N\to\infty} Z_{N,\epsilon}$ is then just $\lim_{\epsilon\to0^+} Z_\epsilon$.  Since $Z_\epsilon$ is given by incrementing $Z_C$ and $Z_L$ in~\eqref{sgladderimpedinZCZL} by $\epsilon$, and the square root is continous in a neighborhood of the positive real axis, $Z_\epsilon$ is continuous in $\epsilon$, which completes the proof.
\end{proof}

\subsection*{Harmonic Functions} With a good understanding of the impedance structures that are possible on the Feynman-Sierpinski Ladder, we proceed to study the harmonic functions, which are equilibrium states of the circuit when a potential is applied across the boundary vertices.  The value of a harmonic function at a fixed point in the Feynman-Sierpinski Ladder depends linearly on the boundary values, so the space of harmonic functions is three dimensional.  The following theorem gives an efficient way to compute the value of a harmonic function from its boundary values by exploiting the weakly self-similar structure of the ladder.

\begin{theorem}
Label the vertices of FSL as in Figure~\ref{fig-SGLadder}. If we give the values of a harmonic function $h$ on the boundary vertices of FSL as a vector $v=(h(p_0),h(p_1),h(p_2))^T$, then the values at the next level are $(h(q_0),h(q_1),h(q_2))^T=Mv$, and those at the internal triangles are $M_0Mv$, $M_1Mv$ and $M_2Mv$, where the matrices are
\begin{gather}
	M= \frac{1}{9z_{C}+5z}  
	\begin{bmatrix}
	3z_{C}+5z & 3z_C & 3z_C \\
	3z_C & 3z_{C}+5z & 3z_C \\
	3z_C & 3z_C &3z_{C}+5z 
	\end{bmatrix}, \label{eqn:matrixMforSGL}\\
	M_{0}=
	\begin{bmatrix}
	1 & 0 &0 \\
	\frac{2}{5} & \frac{2}{5} & \frac{1}{5} \\
	\frac{2}{5} & \frac{1}{5} & \frac{2}{5}
	\end{bmatrix}, \quad
	M_{1}=\begin{bmatrix}
	\frac{2}{5} & \frac{2}{5} & \frac{1}{5} \\
	0 & 1 & 0 \\
	\frac{1}{5} & \frac{2}{5} & \frac{2}{5}
	\end{bmatrix}, \quad
	M_{2}=\begin{bmatrix}
	\frac{2}{5} & \frac{1}{5} & \frac{2}{5}\\
	\frac{1}{5} & \frac{2}{5} & \frac{2}{5} \\
	0 & 0 & 1
	\end{bmatrix}. \label{eqn:othermatricesforSGL}
	\end{gather}

\end{theorem}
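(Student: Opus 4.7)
The plan is to apply Kirchhoff's current law at the interior nodes of one FSL cell and exploit the permutation symmetry. Reading off the structure from Figure~\ref{fig-SGLadder}: at the next level of the recursion three vertices $q_0,q_1,q_2$ appear as the corners of the inner Sierpinski gasket substructure, a capacitor of impedance $Z_C$ joins each $p_i$ to the corresponding $q_i$, and an inductor of impedance $Z_L$ joins each pair $p_i,p_j$. Since these three inductors run only between boundary vertices, they do not appear in Kirchhoff's law at any interior node and play no role in computing $M$, which is reassuring given that $Z_L$ is absent from~\eqref{eqn:matrixMforSGL}. From the argument preceding~\eqref{sgladdereqn1}, the inner structure (three sub-FSLs, each equivalent to a $\Delta$-triangle with edges $Z$, glued at midpoints as in the first step of SG) is seen from $q_0,q_1,q_2$ to be equivalent to a $\Delta$-triangle with edges $5Z/3$, or equivalently a Y-circuit with arms of impedance $5Z/9$ meeting at a virtual central node $c$.

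To compute $M$ I would then write Kirchhoff's law at each $q_i$ and at $c$,
\begin{equation*}
\frac{V(q_i)-V(p_i)}{Z_C}+\frac{V(q_i)-V(c)}{5Z/9}=0\quad(i=0,1,2),\qquad \sum_{i=0}^{2}\bigl(V(c)-V(q_i)\bigr)=0.
\end{equation*}
Summing the three vertex equations and combining with the center equation gives $V(c)=\tfrac{1}{3}(V(p_0)+V(p_1)+V(p_2))$. Substituting back and regrouping yields
\begin{equation*}
V(q_i)=\frac{(3Z_C+5Z)V(p_i)+3Z_C\bigl(V(p_j)+V(p_k)\bigr)}{9Z_C+5Z},\qquad\{i,j,k\}=\{0,1,2\},
\end{equation*}
which is exactly the $i$-th row of~\eqref{eqn:matrixMforSGL}. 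Invariance under permutations of the indices forces $M$ to have the form $\alpha I+\beta J$ with $\alpha+2\beta=1$, so in fact only one entry needs to be computed in full.

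For the matrices $M_0,M_1,M_2$, once $V(q_0),V(q_1),V(q_2)$ are in hand the remaining problem is the classical harmonic-extension problem on the inner SG with those boundary data. By Kigami's $\tfrac{1}{5}$-rule the value at the midpoint between $q_i$ and $q_j$ is $\tfrac{1}{5}\bigl(2V(q_i)+2V(q_j)+V(q_k)\bigr)$; the three corners of the $i$-th inner sub-triangle are $q_i$ together with the two midpoints adjacent to it, and writing these three values as a column gives precisely $M_i\bigl(V(q_0),V(q_1),V(q_2)\bigr)^{T}=M_iMv$ with $M_i$ as stated in~\eqref{eqn:othermatricesforSGL}. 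I expect the main obstacle to be expository rather than computational: correctly reading the circuit layout off the figure so that one sees the capacitors in the pairing $p_i\leftrightarrow q_i$ and the inductors between the $p_i$'s, and justifying the $\Delta$-$Y$ reduction by the same calculation used in~\eqref{sgladdereqn1}. Once those identifications are secured, the remainder is a single Kirchhoff computation followed by the standard $\tfrac{1}{5}$-rule.
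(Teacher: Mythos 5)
Your proposal is correct and follows essentially the same route as the paper: reduce the three inner copies to an effective $\Delta$/Y network with edge impedance $5Z/3$ (arm impedance $5Z/9$), apply Kirchhoff's law at the $q_i$ to obtain $M$, and then invoke the standard $\tfrac15$--$\tfrac25$ harmonic extension rule on the level-one Sierpinski gasket for the $M_j$. The only cosmetic difference is that you eliminate an auxiliary center node of the Y-circuit where the paper writes the equivalent $\Delta$-form Kirchhoff system directly as a $3\times 3$ matrix equation.
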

\begin{proof}
As stated in the proof of Theorem~\ref{sgladderfilterthm}, the three small copies of the circuit shown as triangles in the middle of Figure~\ref{fig-SGLadder} are equivalent to a triangular circuit with vertices the points $q_j$ and edge impedances $5Z/3$.  The fact that the currents at each $q_j$ sum to zero may then be written as a matrix equation
\begin{equation*}
	\frac{3}{5Z}\begin{bmatrix}
		2 & -1 & -1 \\
		-1& 2 & -1\\
		-1 &-1& 2
		\end{bmatrix}	
	\begin{bmatrix}
		V(q_{0})\\
		V(q_{1})\\
		V(q_{2})
		\end{bmatrix}
	+\frac{1}{Z_C} 
	\begin{bmatrix}
	V(q_{0})-V(p_0)\\
	V(q_{1})-V(p_1)\\
	V(q_{2})-V(p_2)
	\end{bmatrix}
	=0,
\end{equation*}
for which the solution is $(V(q_0),V(q_1),V(q_2))^T=M(V(p_0),V(p_1),V(p_2))^T$.

The fact that the values on the vertices of the three small copies of the circuit are given by applying the matrices $M_j$, $j=0,1,2$, to the vector of values at the points $q_j$ is then a standard result in the electrical network theory for the Sierpinski Gasket; see~\cite[Chapter~1]{Str06} for a proof and references to the original literature.
\end{proof}

Power dissipation in fractal Feynman-Sierpinski AC circuit is studied in 
\cite{AC3}.

\section{Hanoi Circuit I}

An alternative modification of the Sierpinski Gasket construction that yields interesting AC circuits is related to the Schreirer graph of the Hanoi Towers group (see~\cite{Sunic}), for which reason we refer to it as a Hanoi circuit.  The set on which it is defined, the Hanoi attractor, is shown in Figure~\ref{fig-Hanoi}. 

\begin{figure}[htb]
\centering
\includegraphics[width=6cm]{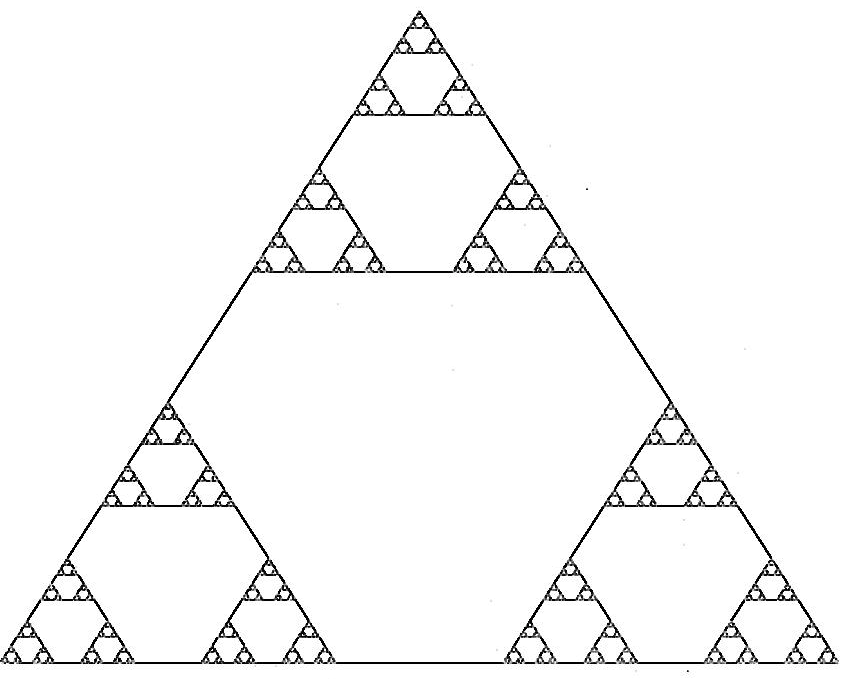}
\caption{The Hanoi Attractor}\label{fig-Hanoi}
\end{figure}

 Resistance forms, spectral properties of the associated Laplacians and other aspects of circuit theory on this set may be found in~\cite{Strichartzaverages,Sunic,2012hanoi,alonso2016energy}.  For our problem we follow the treatment in \cite{AC-fractal}, and begin the circuit construction with a  bilaterally symmetric inverted Y-circuit in which the vertical arm has impedance $Z_1$ and the angled arms each have impedance $Z_2$, as shown on the left in Figure~\ref{fig-Hanoi1}.
\begin{figure}[htb]
  \centering
  \def\svgwidth{\columnwidth}
    \resizebox{0.9\textwidth}{!}{
\begingroup%
  \makeatletter%
  \providecommand\color[2][]{%
    \errmessage{(Inkscape) Color is used for the text in Inkscape, but the package 'color.sty' is not loaded}%
    \renewcommand\color[2][]{}%
  }%
  \providecommand\transparent[1]{%
    \errmessage{(Inkscape) Transparency is used (non-zero) for the text in Inkscape, but the package 'transparent.sty' is not loaded}%
    \renewcommand\transparent[1]{}%
  }%
  \providecommand\rotatebox[2]{#2}%
  \ifx\svgwidth\undefined%
    \setlength{\unitlength}{568.01600159bp}%
    \ifx\svgscale\undefined%
      \relax%
    \else%
      \setlength{\unitlength}{\unitlength * \real{\svgscale}}%
    \fi%
  \else%
    \setlength{\unitlength}{\svgwidth}%
  \fi%
  \global\let\svgwidth\undefined%
  \global\let\svgscale\undefined%
  \makeatother%
  \begin{picture}(1,0.28552043)%
    \put(0,0){\includegraphics[width=\unitlength,page=1]{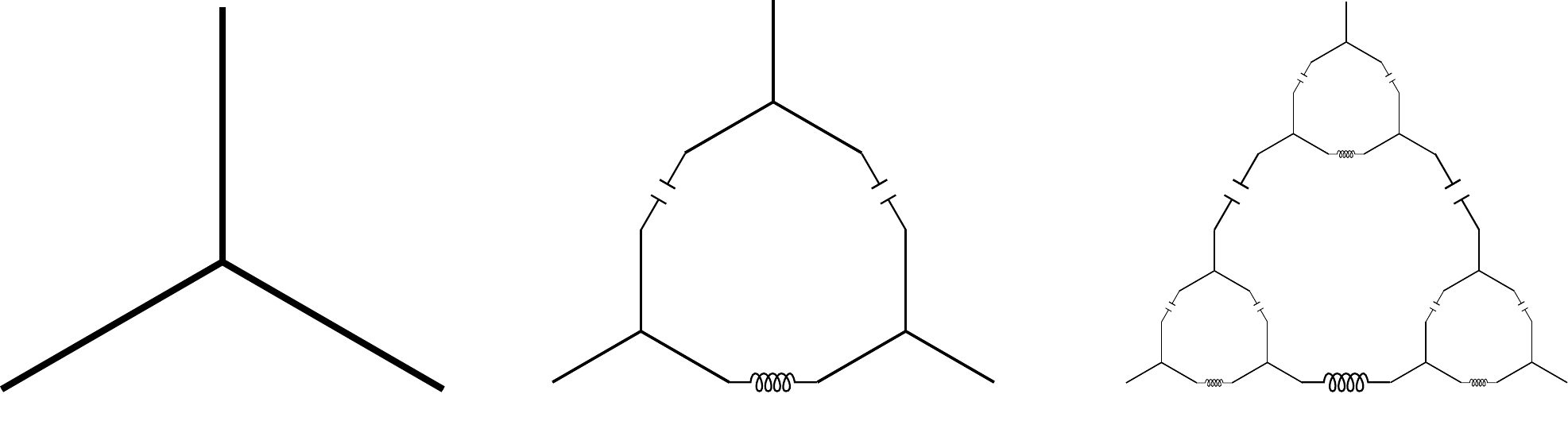}}%
    \put(0.15774204,0.21126166){\color[rgb]{0,0,0}\makebox(0,0)[lb]{\smash{}}}%
    \put(0.15223068,0.19219267){\color[rgb]{0,0,0}\makebox(0,0)[lb]{\smash{$Z_1$}}}%
    \put(0.21596808,0.09851801){\color[rgb]{0,0,0}\makebox(0,0)[lb]{\smash{$Z_2$}}}%
    \put(0.02674767,0.09752211){\color[rgb]{0,0,0}\makebox(0,0)[lb]{\smash{$Z_2$}}}%
    \put(0.5047782,0.25493418){\color[rgb]{0,0,0}\makebox(0,0)[lb]{\smash{$rZ_1$}}}%
    \put(0.42299885,0.21176885){\color[rgb]{0,0,0}\makebox(0,0)[lb]{\smash{$rZ_2$}}}%
    \put(0.57648282,0.16714619){\color[rgb]{0,0,0}\makebox(0,0)[lb]{\smash{$Z_C$}}}%
    \put(0.37508004,0.16714619){\color[rgb]{0,0,0}\makebox(0,0)[lb]{\smash{$Z_C$}}}%
    \put(0.48449339,0.00845047){\color[rgb]{0,0,0}\makebox(0,0)[lb]{\smash{$Z_L$}}}%
    \put(0.75394261,0.16714619){\color[rgb]{0,0,0}\makebox(0,0)[lb]{\smash{$Z_C$}}}%
    \put(0.84997606,0.00845047){\color[rgb]{0,0,0}\makebox(0,0)[lb]{\smash{$Z_L$}}}%
    \put(0.93774025,0.16714619){\color[rgb]{0,0,0}\makebox(0,0)[lb]{\smash{$Z_L$}}}%
  \end{picture}%
\endgroup%
}
\caption{A fractal Hanoi-type circuit construction.}\label{fig-Hanoi1}
\end{figure} 
We perform a replacement operation as shown to obtain a sequence of circuit graphs.  Each graph is obtained by rescaling the previous one and joining three copies as in the central diagram. Impedances on each copy are rescaled by a real factor $r>0$. 

When seeking conditions that imply the characteristic impedances of the first and second circuits are the same, symmetry suggests we should work with the impedance from the upper vertex to the two lower vertices, and that between the two lower vertices.  By Kirchoff's laws we obtain the following system of equations relating $Z_1$ and $Z_2$:
\begin{align}
\label{eq:1} Z_1 + \frac{1}{2}Z_2 &= rZ_1 + \frac{1}{2}\left(rZ_1 + 2r Z_2 + Z_C\right)\\
\label{eq:2} 2Z_2 &= 2r Z_2 + \left(\frac{1}{2rZ_2 + Z_L} + \frac{1}{2}\frac{1}{rZ_1 + rZ_2 + Z_C} \right)^{-1}.
\end{align}
It is convenient that these reduce to a quadratic for $Z_2$ and a formula for $Z_1$.  We state this as a lemma.

\begin{lemma}\label{HanoiIsolnsforZ}
If~\eqref{eq:1} and~\eqref{eq:2} hold, then $r\neq1$.  If also $r\neq\frac{2}{3}$, then
\begin{gather}
Z_1 = \frac{1}{2-3r}\bigl( (2r-1)Z_2 + Z_C\bigr), \text{ and}\label{eq:3}\\
r(5r-3) Z_2^2 + (2r-1)(2Z_C+Z_L) Z_2 + Z_C Z_L =0, \label{eq:4}
\end{gather}
while if $r=\frac{2}{3}$, then $Z_2=-3Z_C$ and $Z_1= \frac{3Z_C(Z_L+2Z_C)}{4(Z_L-2Z_C)}$.
\end{lemma}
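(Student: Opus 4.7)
The argument is purely algebraic: I would solve the linear equation \eqref{eq:1} for $Z_1$ and substitute the result into \eqref{eq:2}, so that the latter becomes a single relation in $Z_2$ alone.

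Expanding \eqref{eq:1} and collecting the $Z_1$ terms gives $(2-3r)Z_1 = (2r-1)Z_2 + Z_C$. When $r\neq 2/3$, dividing by $2-3r$ yields \eqref{eq:3}. When $r = 2/3$ the coefficient of $Z_1$ vanishes and the same relation forces $\tfrac{1}{3}Z_2 + Z_C = 0$, hence $Z_2 = -3Z_C$. For the exclusion $r\neq 1$, I would note that $r=1$ in \eqref{eq:1} gives $Z_1+Z_2+Z_C=0$; but the denominator $2(rZ_1+rZ_2+Z_C)$ appearing in \eqref{eq:2} then vanishes, so \eqref{eq:2} is not well-defined. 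This justifies the standing assumption $r\neq 1$ in what follows.

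In the generic case $r\notin\{1,2/3\}$, I would rewrite \eqref{eq:2} as
\[
\frac{1}{2(1-r)Z_2} = \frac{1}{2rZ_2+Z_L} + \frac{1}{2(rZ_1 + rZ_2 + Z_C)},
\]
and substitute \eqref{eq:3} into the last denominator. The key algebraic observation is that the substitution collapses neatly:
\[
rZ_1 + rZ_2 + Z_C = \frac{(1-r)(rZ_2 + 2Z_C)}{2-3r},
\]
so a factor of $(1-r)$ appears on both sides when the remaining fractions are cleared. Cancelling $(1-r)$ and collecting powers of $Z_2$ should reduce the identity to the quadratic \eqref{eq:4}.

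For the exceptional case $r=2/3$, I would substitute $Z_2=-3Z_C$ and $r=2/3$ directly into \eqref{eq:2}; the parallel combination simplifies to an equation in the single unknown $Z_1$, which I would solve by routine arithmetic to obtain the stated closed form. The main obstacle is not conceptual but organisational: keeping the case distinctions ($r=1$, $r=2/3$, otherwise) explicit so that no division by zero is overlooked, and carrying out the symbolic cancellations carefully enough that the expected factor of $(1-r)$ really does separate and drop out, leaving the clean quadratic \eqref{eq:4} rather than a higher-degree spurious identity.
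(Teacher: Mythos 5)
Your proposal is correct and follows essentially the same route as the paper: solve \eqref{eq:1} linearly for $Z_1$ to get \eqref{eq:3}, use the resulting identity $rZ_1+rZ_2+Z_C=(1-r)(rZ_2+2Z_C)/(2-3r)$ to turn \eqref{eq:2} into the quadratic \eqref{eq:4}, and handle $r=\frac{2}{3}$ by direct substitution. The only cosmetic difference is your justification of $r\neq1$ (a vanishing denominator in \eqref{eq:2} obtained via \eqref{eq:1}, versus the paper's observation that \eqref{eq:2} alone would force a reciprocal to equal zero); both are valid.
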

\begin{proof}
The fact that $r\neq1$ is immediate from~\eqref{eq:2}, as otherwise the reciprocal on the right is zero.  The expression~\eqref{eq:3}  is immediate from~\eqref{eq:1}.  Also note from (\ref{eq:1}) that, provided $r\neq\frac{2}{3}$,
\begin{align}
\label{eq:1''}
r(Z_1 + Z_2) + Z_C = \frac{r}{2-3r}\bigl( (1-r)Z_2 + Z_C) +Z_C  = \frac{1-r}{2-3r} (rZ_2 +2 Z_C).
\end{align}
From \eqref{eq:1''} and \eqref{eq:2} we get
\begin{equation*}
\frac{1}{(1-r) Z_2 }= \frac{2}{2rZ_2 + Z_L} + \frac{2-3r}{(1-r) (rZ_2 +2 Z_C)},
\end{equation*}
so that
\begin{equation*}
\frac{(4r-2)Z_2 + 2 Z_C}{Z_2 (rZ_2+2Z_C)} = \frac{2(1-r)}{2rZ_2+Z_L}.
\end{equation*}
Simplifying the resulting quadratic in $Z_2$ gives~\eqref{eq:4}.

The remaining case $r=\frac{2}{3}$ has $Z_2=-3Z_C$ from~\eqref{eq:1}, and substituting in~\eqref{eq:2} gives 
\begin{equation*}
	\frac{1}{2Z_C} = \frac{1}{4Z_C- Z_L} + \frac{3}{9Z_C-4Z_1},
	\end{equation*}
from which we can compute the stated value for $Z_1$.
\end{proof}

While it is useful to know that the values of $r$, $Z_C$ and $Z_L$ determine at most two pairs of characteristic impedances $(Z_1,Z_2)$, it is not immediately apparent that this result has any physical meaning, because \emph{a priori} the impedances in the circuit could have negative real part.  Our next result identifies the conditions for the circuit to be a filter.  More specifically, we suppose that the circuit is driven by an AC signal with frequency $\omega$, write $C$ for the capacitance and $L$ for the inductance of the components used in the construction, so that $Z_C=\frac{1}{i\omega C}$ and $Z_L=i\omega L$, and identify those values of $r$ and $\omega$ for which the impedances that can be measured in the circuit  have  positive real part.  One point about which we must be careful is that $Z_1$ and $Z_2$ are not actually impedances in the circuit:  they are values computed from a $\Delta$-Y transform.  The basic circuit has three external terminals, and we can measure the impedance across any two. Thus by the leftmost diagram in Figure~\ref{fig-Hanoi1} we can measure $Z_1+Z_2$ and $2Z_2$, and it is these quantities that must have positive real part for the filter condition to hold.

\begin{theorem}[Filter condition]\label{prop-hanoi1filter}
Under the above assumptions:
\begin{enumerate}[(a)]
\item When $r\geq \frac{3}{5}$ and $r\neq1$, there are solutions for $Z_1$ and $Z_2$, but the filter condition fails because $Z_2$ is purely imaginary for all $\omega \in \mathbb{R}_+$. 
\item If $r \in \bigl(0,\frac{3}{5}\bigr)$, the circuit is a filter precisely in the frequency range
\begin{gather*}
\gamma(r) - \sqrt{[\gamma(r)]^2 -1} < 2LC\omega^2 < \gamma(r) + \sqrt{[\gamma(r)]^2 -1}, \text{ where}\\
\gamma(r) = 1+\frac{r(3-5r)}{(2r-1)^2}.
\end{gather*}
\end{enumerate}
\end{theorem}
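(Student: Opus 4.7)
The plan is to substitute $Z_C = 1/(i\omega C)$ and $Z_L = i\omega L$ into the quadratic \eqref{eq:4} of Lemma~\ref{HanoiIsolnsforZ} and analyze the roots via the quadratic formula, tracking when the measurable impedances $Z_1+Z_2$ and $2Z_2$ have positive real part. Under this substitution, \eqref{eq:4} becomes $aZ_2^2 + bZ_2 + c = 0$ with $a = r(5r-3)$ real, $b = i(2r-1)(\omega L - 2/(\omega C))$ purely imaginary, and $c = Z_C Z_L = L/C$ real and positive. Hence the discriminant
\begin{equation*}
\Delta = b^2 - 4ac = -(2r-1)^2(\omega L - 2/(\omega C))^2 - 4r(5r-3)L/C
\end{equation*}
is real, and from $Z_2 = (-b \pm \sqrt{\Delta})/(2a)$, with $-b$ imaginary and $2a$ real, $Z_2$ has a nonzero real part exactly when $\Delta > 0$.

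For part (a), observe that $\Delta > 0$ forces $-4r(5r-3)L/C > 0$, i.e., $r(5r-3) < 0$, hence $r \in (0, 3/5)$. Therefore if $r \geq 3/5$ (and $r \neq 1$ by Lemma~\ref{HanoiIsolnsforZ}) then $\Delta \leq 0$ and $Z_2$ is purely imaginary; since $Z_C$ is purely imaginary, relation \eqref{eq:3} then implies $Z_1$ is purely imaginary as well, so both $Z_1 + Z_2$ and $2Z_2$ have zero real part and the filter condition fails.

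For part (b), assume $r \in (0, 3/5)$. Setting $y := \omega^2 LC$, I use the identity $(\omega L - 2/(\omega C))^2 = (L/C)(y-2)^2/y$ to rewrite $\Delta > 0$ as
\begin{equation*}
(2r-1)^2(y-2)^2 + 4yr(5r-3) < 0,
\end{equation*}
and after expanding and dividing by $(2r-1)^2 > 0$ (treating $r=1/2$ separately) this reduces to $y^2 - 4\gamma(r)\, y + 4 < 0$, with $\gamma(r)$ as stated. Since $\gamma(r) > 1$ on $(0, 3/5)$, the quadratic formula produces two real roots $2\gamma(r) \pm 2\sqrt{\gamma(r)^2 - 1}$, which, after the normalization indicated in the theorem, yields the stated frequency window.

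Finally, I verify the filter property by picking the branch of $\sqrt{\Delta}$ that makes $\mathrm{Re}(Z_2) > 0$ (possible since $a < 0$), which gives $\mathrm{Re}(2Z_2) > 0$; from \eqref{eq:3} and $\mathrm{Re}(Z_C) = 0$ one computes $\mathrm{Re}(Z_1 + Z_2) = \frac{1-r}{2-3r}\mathrm{Re}(Z_2)$, also positive on $(0, 3/5)$. I expect the main technical obstacle to be the clean algebraic reduction to the canonical quadratic in $y$ defining $\gamma(r)$, together with the edge cases $r = 1/2$ (where $2r-1 = 0$ degenerates \eqref{eq:4} to $Z_2^2 = 4L/C$, purely real for all $\omega$, consistent with $\gamma(1/2) = \infty$) and $r = 2/3$ (handled directly by the alternate formulas of Lemma~\ref{HanoiIsolnsforZ}); otherwise the argument is a careful bookkeeping exercise once the discriminant is identified as the key quantity.
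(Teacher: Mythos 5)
Your proof is correct and follows essentially the same route as the paper's: substitute the imaginary impedances into the quadratic \eqref{eq:4}, observe that the roots are purely imaginary unless a real discriminant-type quantity is positive (the paper phrases this via the sum and product of the roots rather than $b^2-4ac$, which is the same computation), reduce to the quadratic $\Omega^2-4\gamma(r)\Omega+4$ in $\Omega=LC\omega^2$, and use \eqref{eq:3} to transfer positivity of $\operatorname{Re}Z_2$ to $\operatorname{Re}(Z_1+Z_2)$. The only point to add is the degenerate case $r=\tfrac{3}{5}$ in part (a), where the leading coefficient vanishes so the quadratic formula does not apply; there the equation is linear and $Z_2=-Z_CZ_L/\bigl((2r-1)(2Z_C+Z_L)\bigr)$ is still purely imaginary, as the paper notes explicitly.
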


See Figure~\ref{fig:HanoiIFilter} for an interpretation of this theorem. Note that $\frac{3}{5}$ is the renormalization factor for fully symmetric resistances on the Hanoi graph to converge to a limiting resistance form on the fractal.  It seems unlikely that this value arises by coincidence in Theorem~\ref{prop-hanoi1filter}, because it also occurs in the condition for our second Hanoi circuit  (see Theorem~\ref{thm:HanoiIIfilter}). However, we do not have an explanation for its occurrence.

\begin{figure}[htp]
\centering
\includegraphics[width=0.7\textwidth]{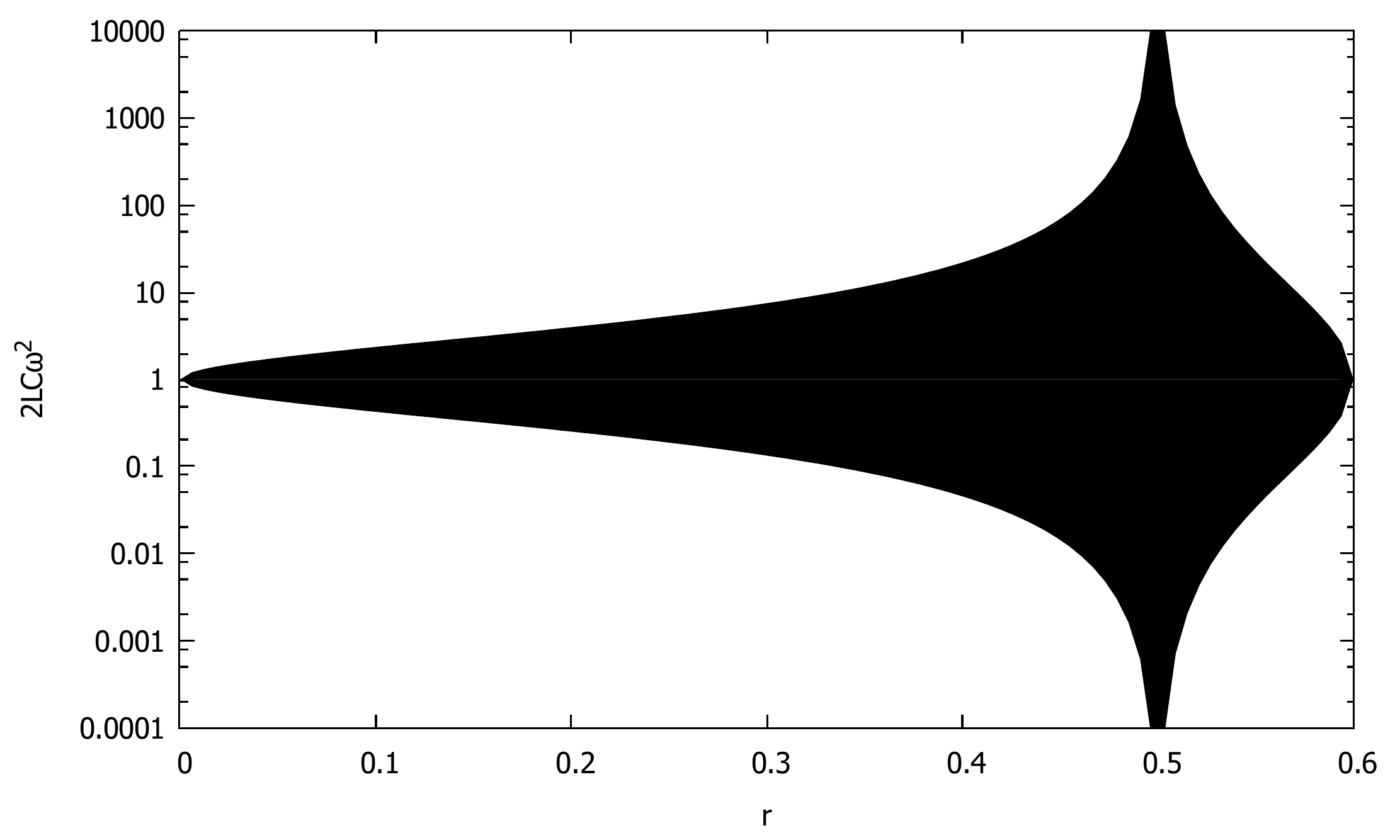}
\caption{The Hanoi Circuit I is a filter in the shaded region.}
\label{fig:HanoiIFilter}
\end{figure}

\begin{proof}
It is useful to rewrite~\eqref{eq:4} as
\begin{equation}\label{eq:5}
r(5r-3) Z_2^2 + (2r-1)\biggl(\frac{2}{i\omega C} +i\omega L\biggr) Z_2 + \frac{L}{C} =0,
\end{equation}
and note some special cases.  When $r\in \{0,\frac{3}{5}\}$, the quadratic term vanishes, so both $Z_2$ and $Z_1$ become purely imaginary for all $\omega$. When $r=\frac{1}{2}$, the linear term vanishes, and there is a positive root $Z_2=2\sqrt{\frac{L}{C}}$.

Assuming $r\notin \{ 0, \frac{3}{5}\}$, we may divide the quadratic equation by $r(5r-3)$ to get
\begin{equation*}
Z_2^2 + \frac{(2r-1)}{r(5r-3)} \biggl(\frac{2}{i\omega C} +i\omega L\biggr) Z_2 + \frac{1}{r(5r-3)}\frac{L}{C}=0,
\end{equation*}
whereupon writing $\alpha$ and $\beta$ for the roots we have
\begin{align*}
\alpha+\beta &= \frac{(2r-1)}{r(5r-3)} \biggl(\omega L - \frac{2}{\omega C}\biggr)i = a i, \\
\alpha\beta & = \frac{1}{r(5r-3)}\frac{L}{C}=b,
\end{align*}
where $a,b\in\mathbb{R}$.  Straightforward algebra shows that
\begin{equation*}
\alpha = \frac{ai + \sqrt{-a^2 -4b}}{2}, \quad \beta=\frac{ai - \sqrt{-a^2-4b}}{2},
\end{equation*}
so that the roots are purely imaginary when $-a^2-4b\leq 0$, or $b\geq -\frac{a^2}{4}$. Translating this result into our context, we deduce that both roots (i.e.\ both possible $Z_2$ values)
 are purely imaginary when
\begin{align}
\label{ineq}
\frac{1}{r(5r-3)}\frac{L}{C} \geq -\frac{1}{4} \left(\frac{2r-1}{r(5r-3)}\frac{L}{\omega}\right)^2 \left(\omega^2-\frac{2}{LC}\right)^2.
\end{align}
Observe that this is automatically satisfied when $r>\frac{3}{5}$ for \emph{any} $\omega$, since the LHS is positive and hence larger than the RHS. Combined with the case $r=\frac{3}{5}$ mentioned above, we deduce statement~(a) of the proposition. (Recall there is no solution when $r=1$.)

Turning to statement~(b), we first rearrange~\eqref{ineq} for  $r\in\bigl(0, \frac{3}{5}\bigr)\setminus\bigl\{\frac{1}{2}\bigr\}$ to read
\begin{equation*}
 LC\omega^2 \leq  \frac{(2r-1)^2}{4r(3-5r)} \bigl(LC\omega^2-2\bigr)^2,
\end{equation*}
which is the following quadratic in the dimensionless quantity $\Omega = LC\omega^2$:
\begin{equation*}
\Omega^2 - 4\left[1+\frac{r(3-5r)}{(2r-1)^2}\right]\Omega +4 \geq 0.
\end{equation*}
This inequality fails precisely when $LC\omega^2$ satisfies the inequality in statement~(b) of the proposition.  In this case, it is evident that one of the roots $\alpha,\beta$ has positive real part, hence there is $Z_2$ with positive real part.  Recall also that when $r=\frac{1}{2}$, the unique root  for $Z_2$ was positive.

Now rearranging the identity~\eqref{eq:3} we find that
\begin{equation*}
	(2-3r)(Z_1+Z_2) = (1-r)Z_2 + Z_C.
	\end{equation*}
Since $Z_C$ is imaginary, it follows that when $r\notin\bigl[\frac{2}{3}, 1\bigr]$, the real part of $Z_1+Z_2$ has the same sign as the real part of $Z_2$.  In particular, this is true when $r\in\bigl(0,\frac{3}{5}\bigr)$, on which interval we have already established the existence of $Z_2$ with positive real part.  Statement~(b) follows.
\end{proof}

\subsection*{Convergence of finite approximations}
Knowing this limiting circuit on the Hanoi attractor, one may ask whether the finite approximations to the Hanoi~I circuit converge as the level of approximation increases, or whether this would be the case if we were to put a small resistor in series with each circuit element.  Unfortunately we were not able to answer these questions.  We note, however, that the problem is more difficult than it was in the case of the Feynman-Sierpinski Ladder, because we have two characteristic resistances ($Z_1,Z_2$) rather than one, so the dynamical system will be two dimensional.  A second complicating factor is that one should presumably consider adding resistances to the finite approximations of the attractor as in Figure~\ref{fig-Hanoi}, rather than to the edges of the graphs in Figure~\ref{fig-Hanoi1}, which include vertices that do not exist in the limiting circuit.

 \subsection*{Harmonic functions}
As was true of the Feynman-Sierpinski Ladder,  the harmonic functions on the Hanoi circuit may be described by giving matrices that allow us to compute the values at vertices from level to level.   To establish notation, it is convenient to label the boundary vertices as $p_0,p_1,p_2$, and those linking the boundary arms to the central loop as $q_0,q_1,q_2$,  as shown in Figure~\ref{fig-Hanoi1impedances}.  The figure also shows the first step in reducing the circuit to the initial $Y$-configuration. 

\begin{figure}[htb]
  \centering
  \def\svgwidth{\columnwidth}
    \resizebox{0.9\textwidth}{!}{
\begingroup%
  \makeatletter%
  \providecommand\color[2][]{%
    \errmessage{(Inkscape) Color is used for the text in Inkscape, but the package 'color.sty' is not loaded}%
    \renewcommand\color[2][]{}%
  }%
  \providecommand\transparent[1]{%
    \errmessage{(Inkscape) Transparency is used (non-zero) for the text in Inkscape, but the package 'transparent.sty' is not loaded}%
    \renewcommand\transparent[1]{}%
  }%
  \providecommand\rotatebox[2]{#2}%
  \ifx\svgwidth\undefined%
    \setlength{\unitlength}{400.32460938bp}%
    \ifx\svgscale\undefined%
      \relax%
    \else%
      \setlength{\unitlength}{\unitlength * \real{\svgscale}}%
    \fi%
  \else%
    \setlength{\unitlength}{\svgwidth}%
  \fi%
  \global\let\svgwidth\undefined%
  \global\let\svgscale\undefined%
  \makeatother%
  \begin{picture}(1,0.36622975)%
    \put(0,0){\includegraphics[width=\unitlength,page=1]{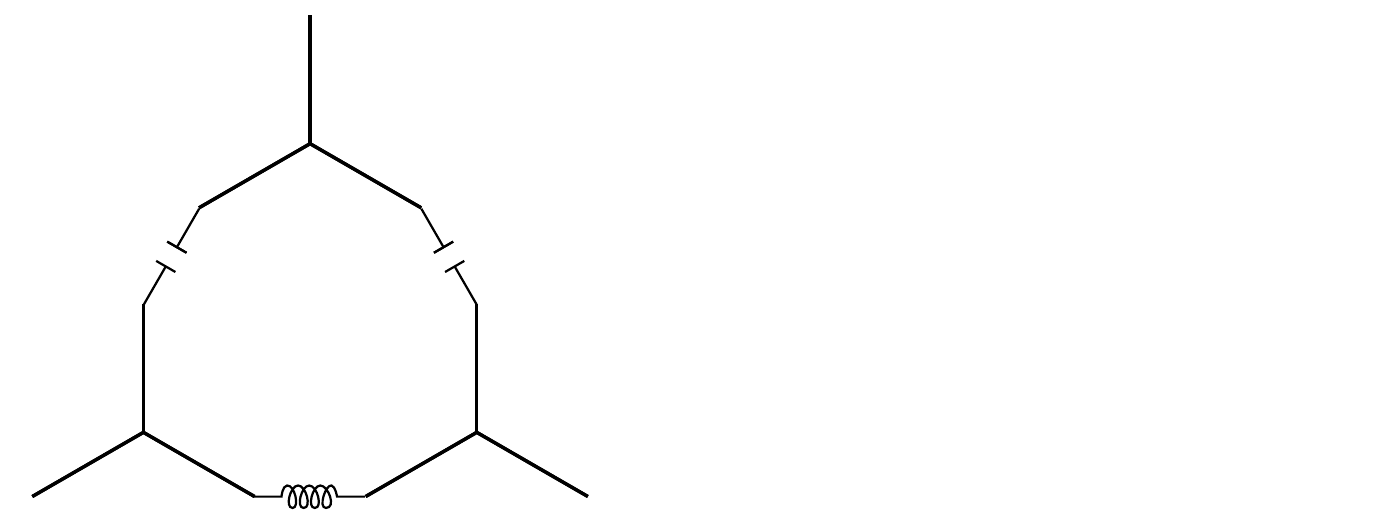}}%
    \put(0.2091057,0.22754054){\color[rgb]{0,0,0}\makebox(0,0)[lb]{\smash{$q_0$}}}%
    \put(0.11318354,0.06767028){\color[rgb]{0,0,0}\makebox(0,0)[lb]{\smash{$q_1$}}}%
    \put(0.30702623,0.06767028){\color[rgb]{0,0,0}\makebox(0,0)[lb]{\smash{$q_2$}}}%
    \put(0.23308624,0.34944161){\color[rgb]{0,0,0}\makebox(0,0)[lb]{\smash{$p_0$}}}%
    \put(-0.0027224,0.01571244){\color[rgb]{0,0,0}\makebox(0,0)[lb]{\smash{$p_1$}}}%
    \put(0.42792812,0.01571244){\color[rgb]{0,0,0}\makebox(0,0)[lb]{\smash{$p_2$}}}%
    \put(0.11,0.218){\color[rgb]{0,0,0}\makebox(0,0)[lb]{\smash{$p_{01}$}}}%
    \put(0.312,0.218){\color[rgb]{0,0,0}\makebox(0,0)[lb]{\smash{$p_{02}$}}}%
    \put(0.065,0.14){\color[rgb]{0,0,0}\makebox(0,0)[lb]{\smash{$p_{10}$}}}%
    \put(0.35,0.14){\color[rgb]{0,0,0}\makebox(0,0)[lb]{\smash{$p_{20}$}}}%
    \put(0.175,0.022){\color[rgb]{0,0,0}\makebox(0,0)[lb]{\smash{$p_{12}$}}}%
    \put(0.245,0.022){\color[rgb]{0,0,0}\makebox(0,0)[lb]{\smash{$p_{21}$}}}%
    \put(0,0){\includegraphics[width=\unitlength,page=2]{Hanoi1ImpedancesNew.pdf}}%
    \put(0.51885433,0.01571244){\color[rgb]{0,0,0}\makebox(0,0)[lb]{\smash{$p_1$}}}%
    \put(0.94850566,0.01571244){\color[rgb]{0,0,0}\makebox(0,0)[lb]{\smash{$p_2$}}}%
    \put(0.75666134,0.34944161){\color[rgb]{0,0,0}\makebox(0,0)[lb]{\smash{$p_0$}}}%
    \put(0.75566215,0.2585154){\color[rgb]{0,0,0}\makebox(0,0)[lb]{\smash{$q_0$}}}%
    \put(0.86857053,0.06966866){\color[rgb]{0,0,0}\makebox(0,0)[lb]{\smash{$q_2$}}}%
    \put(0.59379351,0.06966866){\color[rgb]{0,0,0}\makebox(0,0)[lb]{\smash{$q_1$}}}%
    \put(0.81061756,0.16758919){\color[rgb]{0,0,0}\makebox(0,0)[lb]{\smash{$rZ_1+rZ_2+Z_C$}}}%
    \put(0.69,0.02){\color[rgb]{0,0,0}\makebox(0,0)[lb]{\smash{$2rZ_2+Z_L$}}}%
  \end{picture}%
\endgroup%
}
\caption{Reduction of the Hanoi~I circuit.}\label{fig-Hanoi1impedances}
\end{figure}

\begin{lemma}\label{HanoiIharmoniconelevel}
If the values of a harmonic function at $p_0,p_1,p_2$ are expressed as a column vector, then the vector of values at $q_0,q_1,q_2$ may be obtained by multiplying by a matrix with eigenvalues $\lambda_j$ and eigenvectors $v_j$ as follows:
\begin{align*}
	\lambda_0 & = 1, && v_0=\begin{pmatrix}1&1&1\end{pmatrix}^T;\\
	\lambda_1&= (1-r), && v_-=\begin{pmatrix} 0&1&-1 \end{pmatrix}^T, \quad v_+=\begin{pmatrix}\frac{2Z_1}{Z_2}&-1&-1\end{pmatrix}^T.
\end{align*}
\end{lemma}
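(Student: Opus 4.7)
The plan is to derive the matrix mapping boundary values $(V(p_0),V(p_1),V(p_2))^T$ to interior values $(V(q_0),V(q_1),V(q_2))^T$ by imposing Kirchhoff's current law at each $q_j$ in the reduced circuit of Figure~\ref{fig-Hanoi1impedances}, and then to read off its eigenvectors using the bilateral symmetry that swaps indices $1$ and $2$. Writing $a=rZ_1$, $b=rZ_2$, $c=rZ_1+rZ_2+Z_C$ and $d=2rZ_2+Z_L$, the harmonic equations become
\begin{align*}
(V(p_0)-V(q_0))/a + (V(q_1)+V(q_2)-2V(q_0))/c &= 0,\\
(V(p_j)-V(q_j))/b + (V(q_0)-V(q_j))/c + (V(q_k)-V(q_j))/d &= 0 \quad (\{j,k\}=\{1,2\}).
\end{align*}
These linear relations define the matrix $M$ whose eigenstructure is the object of the claim.

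Because $M$ commutes with the involution swapping coordinates $1$ and $2$, it preserves the $1$-dimensional antisymmetric subspace $\{v_0=0,\ v_1+v_2=0\}$ and the $2$-dimensional symmetric subspace $\{v_1=v_2\}$. The eigenvector $v_0=(1,1,1)^T$ with eigenvalue $1$ is immediate, since constant voltages produce no currents. For the antisymmetric subspace, set $V(p_0)=0$, $V(p_1)=1=-V(p_2)$, note $V(q_0)=0$ by antisymmetry, and solve the $q_1$ equation to get $V(q_1)=\bigl(1/b+1/c+2/d\bigr)^{-1}/b$. I plan to show this equals $1-r$ by multiplying through by $2b$ and recognizing the resulting identity $2b/d+b/c = r/(1-r)$ as precisely the characteristic impedance equation~\eqref{eq:2} (in its rearranged form $1/d+1/(2c)=1/(2(1-r)Z_2)$). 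Thus $Mv_-=(1-r)v_-$.

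For the symmetric subspace I would search for an eigenvector $(\alpha,\beta,\beta)^T$, setting $\xi=\lambda\alpha$, $\eta=\lambda\beta$ in the two surviving equations and requiring the resulting $2\times 2$ linear system to be singular. Expanding the determinant and evaluating at $\lambda=1-r$ yields a factor proportional to $rc-(1-r)(b+2a)$, i.e.\ to $rZ_1+rZ_2+Z_C-(1-r)(Z_2+2Z_1)$, which vanishes exactly when $Z_1=((2r{-}1)Z_2+Z_C)/(2{-}3r)$; this is relation~\eqref{eq:3} of Lemma~\ref{HanoiIsolnsforZ}, guaranteed under our standing hypotheses. Substituting $\lambda=1-r$ back into either row of the $2\times 2$ system then gives $\alpha b = -2a\beta$, i.e.\ $\alpha/\beta=-2Z_1/Z_2$, so the eigenvector is $v_+=(2Z_1/Z_2,-1,-1)^T$ up to scaling.

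The conceptual content is that the two nontrivial eigenvalue identities are not new computations but are exactly the two characteristic impedance relations~\eqref{eq:1} and~\eqref{eq:2} (equivalently~\eqref{eq:3} and~\eqref{eq:4}) already derived in Lemma~\ref{HanoiIsolnsforZ}; that is why the lemma is so clean. The only real bookkeeping obstacle is to transcribe the geometry of Figure~\ref{fig-Hanoi1impedances} into the correct impedance labels $a,b,c,d$ (in particular remembering that $p_0$ connects to $q_0$ through an $rZ_1$ arm while $p_1,p_2$ connect through $rZ_2$ arms), after which the verification is algebraic and uses nothing beyond the previously established relations.
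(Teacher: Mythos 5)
Your proposal is correct and follows essentially the same route as the paper: identify the constant eigenvector, decompose by the bilateral symmetry into antisymmetric and symmetric sectors, and recognize that the two eigenvalue identities are exactly the characteristic impedance relations \eqref{eq:2} and \eqref{eq:1} (equivalently \eqref{eq:3}). The only cosmetic difference is that in the symmetric sector the paper derives $\lambda=1-r$ constructively by collapsing the circuit to a series line and reading off the voltage-divider ratio, whereas you verify it via the vanishing of a $2\times2$ determinant; the algebraic content is identical.
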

\begin{proof}
The first eigenvalue and eigenvector are obvious because the circuit is at a constant potential.  The symmetry in the circuit ensures the other eigenvectors are symmetric or antisymmetric under the reflection fixing $p_0$ and interchanging $p_1$ and $p_2$.  Considering the second circuit in Figure~\ref{fig-Hanoi1impedances}, we see that the antisymmetric eigenvector has voltage $(1-\lambda_1)$ across the arm from $p_1$ to $q_1$, $\lambda_1$ across the slanted sides of the triangle, and $2\lambda_1$ across the base of the triangle.  Summing currents at $q_1$ then yields
\begin{gather*}
	\frac{1-\lambda_1}{rZ_2} = \frac{\lambda_1}{rZ_1+rZ_2+Z_C} + \frac{2\lambda_1}{2rZ_2+Z_L},\text{ and hence}\\
	\frac{1}{rZ_2} =\lambda_1 \biggl( \frac{1}{rZ_2} +  \frac{1}{rZ_1+rZ_2+Z_C} + \frac{2}{2rZ_2+Z_L} \biggr)
			= \frac{\lambda_1}{r(1-r)Z_2},
	\end{gather*}
where the second equality is from~\eqref{eq:2}.  This gives $\lambda_1$.  For the symmetric one there is no voltage across the base of the triangle, so we may treat it by identifying $p_1$ with $p_2$ and $q_1$ with $q_2$ with edges in parallel.  Then the circuit is a line with impedances $rZ_1$, $(rZ_1+rZ_2+Z_C)/2$, and $rZ_2/2$, so the voltage from $q_0$ to $q_1$ and $q_2$ is proportional to that from $p_0$ to $p_1$ and $p_2$ with proportionality constant $(rZ_1+rZ_2+Z_C)/(3rZ_1+2rZ_2+Z_C)$. Since  $rZ_1+rZ_2+Z_C=(1-r)(2Z_1+Z_2)$ and $3rZ_1+2rZ_2+Z_C=2Z_1+Z_2$ by~\eqref{eq:1}, we get that the eigenvalue is again $(1-r)$.  Symmetry provides that the corresponding eigenvector is of the form $(a,-1,-1)$, so if we write the current through the line as $I$, we have voltage $IrZ_1=(1-\lambda_2)a$  from $p_0$ to $q_0$, and $IrZ_2/2=(1-\lambda_2)$ from $q_1$ to $p_1$.  Eliminating $I$ gives $a=2Z_1/Z_2$.
\end{proof}

The values at the six internal vertices on the loop in the left diagram of Figure~\ref{fig-Hanoi1impedances} are then obtained by linear interpolation according to the impedances in the loop.  In particular we have the following:

\begin{theorem}
Suppose the values of a harmonic function $h$ at the vertices $p_j$  are written using the vector $v=(h(p_0),h(p_1),h(p_2))^T$ as in Lemma~\ref{HanoiIharmoniconelevel}.  The values at the vertices of the three inverted $Y$-circuits at the first stage of the construction (Figure~\ref{fig-Hanoi1impedances} on left) are given by the vectors $(h(p_0),h(p_{01}),h(p_{02}))^T=M_0 v$, $(h(p_{10}),h(p_1),h(p_{12}))^T=M_1 v$, and $(h(p_{20}),h(p_{21}),h(p_2)^T=M_2 v$, where
\begin{align*}
	M_0
	&= \begin{bmatrix}
	1&0&0\\
	1-r & r(Z_1+Z_2)/b& rZ_1/b \\
	1-r & rZ_1/b & r(Z_1+Z_2)/b
	\end{bmatrix}, \\
	M_1
	&= \begin{bmatrix}
	r( Z_1+Z_2)/b & 1-r  & r/2 -rZ_2/2b \\
	0 & 1& 0 \\
	rZ_2/b & 1/2 - rZ_2/2b +c& 1/2 - rZ_2/2b -c
	\end{bmatrix},\\
	M_2
	&= \begin{bmatrix}
	r( Z_1+Z_2)/b & r/2 -rZ_2/2b & 1-r \\
	rZ_2/b & 1/2 - rZ_2/2b -c & 1/2 - rZ_2/2b +c\\
	0 & 0& 1
	\end{bmatrix},
	\end{align*}
and we have written $b=2Z_1+Z_2$ and $c= (1-r)Z_L/(4rZ_2+2Z_L)$ for notational simplicity.
\end{theorem}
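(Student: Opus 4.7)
The plan is to compose two steps: first, invoke Lemma~\ref{HanoiIharmoniconelevel} to obtain $(h(q_0),h(q_1),h(q_2))^T$ as a linear function $Nv$ of the boundary data $v$; second, apply a voltage divider at each of the six loop vertices $p_{ij}$. The key observation is that every interior loop vertex has exactly two neighbours---the centre $q_i$ of its own $Y$-subcircuit, and the partner vertex $p_{ji}$ joined to it by $Z_C$ (upper chains) or $Z_L$ (bottom chain)---and the same holds for $p_{ji}$. Kirchhoff's current law on the resulting three-edge chain $q_i\to p_{ij}\to p_{ji}\to q_j$ therefore collapses to linear interpolation in accumulated impedance.

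For each upper chain ($j\in\{1,2\}$) the ordered impedances along $q_0\to p_{0j}\to p_{j0}\to q_j$ are $rZ_2,\,Z_C,\,rZ_1$, summing to $K:=rZ_1+rZ_2+Z_C$, so that the voltage divider gives
\begin{equation*}
h(p_{0j})=\frac{(rZ_1+Z_C)\,h(q_0)+rZ_2\,h(q_j)}{K},\quad h(p_{j0})=\frac{rZ_1\,h(q_0)+(rZ_2+Z_C)\,h(q_j)}{K};
\end{equation*}
for the bottom chain $q_1\to p_{12}\to p_{21}\to q_2$ the impedances are $rZ_2,\,Z_L,\,rZ_2$, summing to $2rZ_2+Z_L$, and give analogous symmetric formulas for $h(p_{12})$ and $h(p_{21})$. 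Decoding the spectral data of Lemma~\ref{HanoiIharmoniconelevel} into coordinates using $\mathbb{R}^3=\langle v_0\rangle\oplus\langle v_-,v_+\rangle$ yields
\begin{equation*}
N=\frac{1}{b}\begin{pmatrix}b-2rZ_1&rZ_1&rZ_1\\rZ_2&(2-r)Z_1+(1-r)Z_2&rZ_1\\rZ_2&rZ_1&(2-r)Z_1+(1-r)Z_2\end{pmatrix},
\end{equation*}
and substituting the rows of $N$ into the six voltage-divider formulas produces a rational expression for each entry of $M_0,M_1,M_2$.

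To simplify these expressions to the stated closed forms I will use two identities drawn from Lemma~\ref{HanoiIsolnsforZ}: (a) $K=(1-r)b$, equivalent to the rearrangement $b=3rZ_1+2rZ_2+Z_C$ of~\eqref{eq:1}; and (b) the linear form $Z_C=(2-3r)Z_1+(1-2r)Z_2$ of~\eqref{eq:3}. Identity (a) cancels $K$ in the denominators against $b$; identity (b) rewrites every expression involving $Z_C$ as a polynomial in $Z_1,Z_2$, after which numerators factor cleanly as $(1-r)b$ times a simple combination such as $Z_1+Z_2$, $rZ_1$, or $(2-r)Z_1+(1-r)Z_2$. For instance, the $h(p_2)$ coefficient in $h(p_{01})$ has numerator $(rZ_1+Z_C)(rZ_1)+rZ_2(rZ_1)=rZ_1 K$, collapsing at once to the expected entry $rZ_1/b$. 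The main bookkeeping obstacle arises in the bottom-chain rows of $M_1$ and $M_2$: here the combined coefficients of $h(p_1)$ and $h(p_2)$ in $h(p_{12})$ (respectively $h(p_{21})$) blend contributions through $Z_L$, and one must recognise them as $\bigl(1/2-rZ_2/(2b)\bigr)\pm c$ with $c=(1-r)Z_L/(4rZ_2+2Z_L)$. Verifying this last decomposition reduces after expansion to routine manipulation using only $b=2Z_1+Z_2$, but its form is not obvious from the expressions one initially writes down; this is the principal algebraic hurdle of the proof, after which all remaining entries follow by direct substitution.
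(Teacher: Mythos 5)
Your proposal is correct and follows essentially the same route as the paper: both pass from the boundary data to the $q$-values supplied by Lemma~\ref{HanoiIharmoniconelevel} and then linearly interpolate along the three series chains with weights $rZ_2,Z_C,rZ_1$ (normalized by $rZ_1+rZ_2+Z_C=(1-r)b$ via~\eqref{eq:1}) and $rZ_2,Z_L,rZ_2$ (normalized by $2rZ_2+Z_L$). The only difference is organizational: you work in the standard basis through the explicit matrix $N$ and simplify entry by entry, whereas the paper evaluates the eigenfunctions $h_\pm$ at the internal vertices and assembles the columns from the delta-basis harmonics $g_0=Z_2(1+h_+)/b$ and $g_\pm=(2Z_1-Z_2h_+)/(2b)\pm h_-/2$.
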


\begin{corollary}
The eigenvalues and eigenvectors of the harmonic interpolation map $M_0$ are particularly simple: they are $1$, $r$, and $rZ_2/b$ with eigenvectors $v_0=(1,1,1)^T$, $v_1=(0,1,1)^T$,  and $v_2=(0,1,-1)^T$, respectively.
\end{corollary}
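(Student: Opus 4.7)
The proof should proceed by direct verification, exploiting the block structure of $M_0$. The matrix has the form
\begin{equation*}
M_0 = \begin{bmatrix} 1 & 0 & 0 \\ 1-r & A & B \\ 1-r & B & A \end{bmatrix},
\end{equation*}
with $A = r(Z_1+Z_2)/b$ and $B = rZ_1/b$, so any vector with zero first coordinate is mapped into the subspace spanned by $e_2$ and $e_3$ by the symmetric $2\times 2$ block $\begin{bmatrix} A & B \\ B & A \end{bmatrix}$.

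First I would handle the eigenvalue $1$. The row sums of $M_0$ are $1$, $(1-r)+A+B$, and $(1-r)+B+A$, and $A+B = r(2Z_1+Z_2)/b = r$ by definition of $b$. Thus each row sums to $1$, so $M_0 v_0 = v_0$. This also matches the physical fact that constants are harmonic.

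For the other two eigenvectors, note that $v_1$ and $v_2$ lie in the invariant subspace $\{x_1 = 0\}$, on which $M_0$ acts via the symmetric block above. A symmetric $2\times 2$ matrix with equal diagonal entries has eigenvectors $(1,1)^T$ and $(1,-1)^T$ with eigenvalues $A+B$ and $A-B$. Computing, $A+B = r$ (as above) and $A-B = rZ_2/b$, giving the claimed eigenpairs $(r,v_1)$ and $(rZ_2/b, v_2)$.

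There is no real obstacle here; the corollary is essentially a direct computation once one observes the block structure and uses the identity $2Z_1 + Z_2 = b$. The only thing worth noting is that the eigenvalue $r$ already appeared (without the middle row contribution) in Lemma~\ref{HanoiIharmoniconelevel} through $\lambda_1 = 1-r$ being the eigenvalue of the $p \to q$ map; here the complementary scale $r$ reflects the fact that the internal $Y$-circuits form the remainder of the unit interpolation.
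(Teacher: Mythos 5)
Your verification is correct: the row-sum identity gives the eigenvalue $1$, and on the invariant subspace $\{x_1=0\}$ the symmetric block $\begin{bmatrix} A & B \\ B & A\end{bmatrix}$ with $A=r(Z_1+Z_2)/b$, $B=rZ_1/b$ has eigenpairs $(A+B,(1,1)^T)$ and $(A-B,(1,-1)^T)$, and indeed $A+B=r(2Z_1+Z_2)/b=r$ and $A-B=rZ_2/b$. This differs slightly in emphasis from the paper, whose proof environment for this corollary is really devoted to deriving the matrix entries themselves: the columns of $M_0$ are the values of the harmonic functions $g_0=Z_2(1+h_+)/b$ and $g_\pm=(2Z_1-Z_2h_+)/(2b)\pm h_-/2$ built from the symmetric and antisymmetric eigenfunctions $h_\pm$ of Lemma~\ref{HanoiIharmoniconelevel}, so the eigenstructure of $M_0$ is implicit in the reflection symmetry and is never checked separately. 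Your direct computation from the explicit matrix is more elementary and self-contained, while the paper's route explains \emph{why} the eigenvectors must be the constant, symmetric, and antisymmetric ones; either argument suffices. Your closing heuristic about $\lambda_1=1-r$ being ``complementary'' to the eigenvalue $r$ is harmless but not needed and is somewhat loose, since the eigenvalue $1-r$ in Lemma~\ref{HanoiIharmoniconelevel} belongs to the $p\to q$ map rather than to $M_0$.
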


\begin{proof}
We write $h_\pm$ for the harmonic functions with boundary data $v_\pm$ as in Lemma~\ref{HanoiIharmoniconelevel}.  Recall that $h_-(q_0)=0$ and $h_-(q_1)=-h_-(q_2)=1-r$.  The values at $p_{01}$ and $p_{10}$ may be obtained by interpolating using the series of impedances $rZ_2$,  $Z_C$, $rZ_1$,  normalized by dividing by $rZ_1+rZ_2+Z_C$. Since $rZ_1+rZ_2+Z_C=(1-r)b$ by~\eqref{eq:1}, we obtain $h_-(p_{01})=-h_-(p_{02})=rZ_2/b$ and $h_-(p_{10})=-h_-(p_{20})=(rZ_2+Z_C)/b$.  Similarly we may interpolate between the values at $q_1$ and $q_2$ with weights $rZ_2$, $Z_L$ and $rZ_2$ normalized by $(2rZ_2+Z_L)^{-1}$.  The voltage drop from $q_1$ to $p_{12}$ is $2r(1-r)Z_2/(2rZ_2+Z_L)$, so after a little algebra $h_-(p_{12})=-h_-(p_{21})=(1-r)Z_L/(2rZ_2+Z_L)=2c$.

Similar reasoning applies to $h_+$.  In the proof of Lemma~\ref{HanoiIharmoniconelevel} we identified vertices and treated the circuit as a line.  For the segment between $q_0$ and $q_1,q_2$, the impedances are $rZ_2/2$, $Z_C/2$ and $rZ_1/2$, normalized by a factor $2/(rZ_1+rZ_2+Z_C)=2/((1-r)b)$.  The voltage difference is $(1-r)b/Z_2$, so the voltage differences are $r$, $Z_C/Z_2$ and $rZ_1/Z_2$.  Thus $h_+(p_{01})=h_+(p_{02})= 2(1-r)Z_1/Z_2 -r$ and $h_+(p_{10})=h_+(p_{20})=r-1+rZ_1/Z_2$.  Across the base there is no voltage difference, so $h_+(p_{12})=h_+(p_{21})=h_+(q_1)=r-1$.

The columns of the matrices are obtained using the values of the functions $g_0 =Z_2(1+h_+)/b$ and $g_\pm=(2Z_1-Z_2 h_+)/(2b)\pm h_-/2$ at the vertices of the $Y$-circuit corresponding to the matrix.  We compute
\begin{align*}
	M_0
	&=\begin{bmatrix}
	g_0(p_0) & g_+(p_0) & g_-(p_0) \\
	g_0(p_{01}) & g_+(p_{01}) & g_-(p_{01}) \\
	g_0(p_{02}) & g_+(p_{02}) & g_-(p_{02}) 
	\end{bmatrix}
	= \begin{bmatrix}
	1&0&0\\
	1-r & r(Z_1+Z_2)/b& rZ_1/b \\
	1-r & rZ_1/b & r(Z_1+Z_2)/b
	\end{bmatrix},
	\\
	M_1
	&=\begin{bmatrix}
	g_0(p_{10}) & g_+(p_{10}) & g_-(p_{10}) \\
	g_0(p_1) & g_+(p_1) & g_-(p_1) \\
	g_0(p_{12}) & g_+(p_{12}) & g_-(p_{12}) 
	\end{bmatrix}
	= \begin{bmatrix}
	r( Z_1+Z_2)/b & 1-r  & r/2 -rZ_2/2b \\
	0 & 1& 0 \\
	rZ_2/b & 1/2 - rZ_2/2b +c& 1/2 - rZ_2/2b -c
	\end{bmatrix},\\
	M_2
	&=\begin{bmatrix}
	g_0(p_{20}) & g_+(p_{20}) & g_-(p_{20}) \\
	g_0(p_{21}) & g_+(p_{21}) & g_-(p_{21}) \\
	g_0(p_2) & g_+(p_2) & g_-(p_2)
	\end{bmatrix}
	= \begin{bmatrix}
	r( Z_1+Z_2)/b & r/2 -rZ_2/2b & 1-r \\
	rZ_2/b & 1/2 - rZ_2/2b -c & 1/2 - rZ_2/2b +c\\
	0 & 0& 1
	\end{bmatrix}.
	\end{align*}\qedhere
\end{proof}








\section{Hanoi Circuit~II}\label{sec-Hanoi2}
A variation of the Hanoi~I circuit is shown in Figure~\ref{fig-Hanoi2} below.  Again we begin with a  bilaterally symmetric inverted Y  circuit in which the vertical arm has impedance $Z_1$ and the angled arms each have impedance $Z_2$, as shown on the left in the figure.  Three copies of this with impedances scaled by $r>0$ are glued as shown on the right in the figure, and the process is repeated indefinitely.
\begin{figure}[htb]
  \centering
  \def\svgwidth{\columnwidth}
    \resizebox{0.8\textwidth}{!}{
		\begingroup%
  \makeatletter%
  \providecommand\color[2][]{%
    \errmessage{(Inkscape) Color is used for the text in Inkscape, but the package 'color.sty' is not loaded}%
    \renewcommand\color[2][]{}%
  }%
  \providecommand\transparent[1]{%
    \errmessage{(Inkscape) Transparency is used (non-zero) for the text in Inkscape, but the package 'transparent.sty' is not loaded}%
    \renewcommand\transparent[1]{}%
  }%
  \providecommand\rotatebox[2]{#2}%
  \ifx\svgwidth\undefined%
    \setlength{\unitlength}{364.88687299bp}%
    \ifx\svgscale\undefined%
      \relax%
    \else%
      \setlength{\unitlength}{\unitlength * \real{\svgscale}}%
    \fi%
  \else%
    \setlength{\unitlength}{\svgwidth}%
  \fi%
  \global\let\svgwidth\undefined%
  \global\let\svgscale\undefined%
  \makeatother%
  \begin{picture}(1,0.41999649)%
    \put(0,0){\includegraphics[width=\unitlength,page=1]{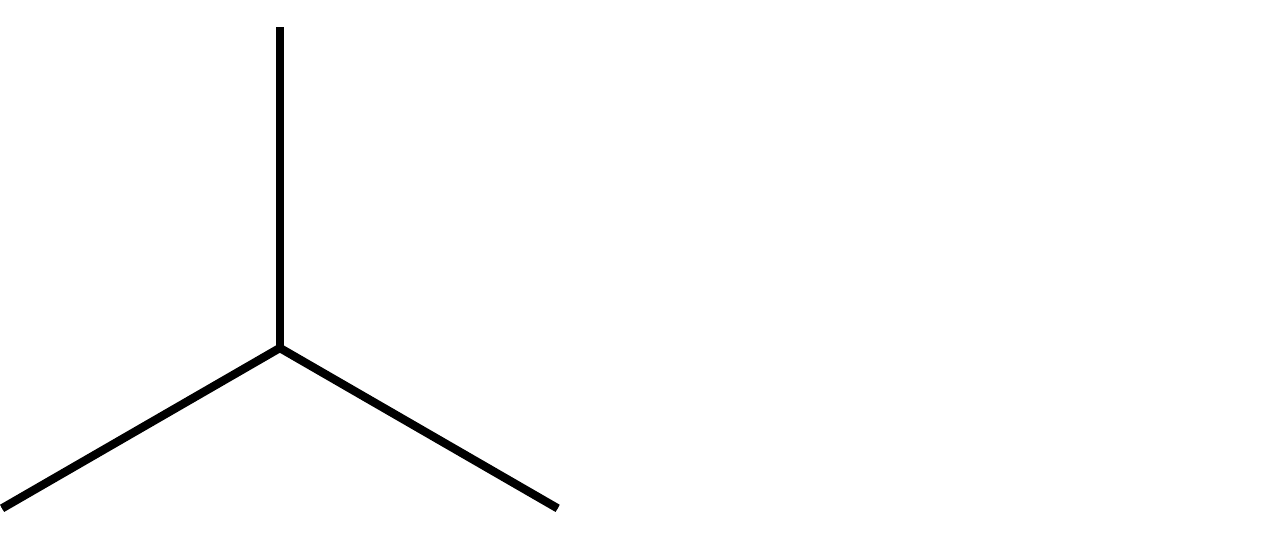}}%
    \put(0.23697609,0.27){\color[rgb]{0,0,0}\makebox(0,0)[lb]{\smash{$Z_1$}}}%
    \put(0.34,0.1){\color[rgb]{0,0,0}\makebox(0,0)[lb]{\smash{$Z_2$}}}%
    \put(0.086,0.1){\color[rgb]{0,0,0}\makebox(0,0)[lb]{\smash{$Z_2$}}}%
    \put(0,0){\includegraphics[width=\unitlength,page=2]{Hanoi2inkscape.pdf}}%
    \put(0.78195445,0.30923964){\color[rgb]{0,0,0}\makebox(0,0)[lb]{\smash{$rZ_1$}}}%
    \put(0.73,0.342){\color[rgb]{0,0,0}\makebox(0,0)[lb]{\smash{$p_{00}$}}}%
    \put(0.69,0.26){\color[rgb]{0,0,0}\makebox(0,0)[lb]{\smash{$rZ_2$}}}%
    \put(0.704,0.217){\color[rgb]{0,0,0}\makebox(0,0)[lb]{\smash{$p_{01}$}}}%
    \put(0.805,0.217){\color[rgb]{0,0,0}\makebox(0,0)[lb]{\smash{$p_{02}$}}}%
    \put(0.86961396,0.20562622){\color[rgb]{0,0,0}\makebox(0,0)[lb]{\smash{$Z_C$}}}%
    \put(0.64,0.20562622){\color[rgb]{0,0,0}\makebox(0,0)[lb]{\smash{$Z_C$}}}%
    \put(0.674,0.166){\color[rgb]{0,0,0}\makebox(0,0)[lb]{\smash{$p_{10}$}}}%
    \put(0.836,0.166){\color[rgb]{0,0,0}\makebox(0,0)[lb]{\smash{$p_{20}$}}}%
    \put(0.595,0.035){\color[rgb]{0,0,0}\makebox(0,0)[lb]{\smash{$p_{11}$}}}%
    \put(0.715,0.035){\color[rgb]{0,0,0}\makebox(0,0)[lb]{\smash{$p_{12}$}}}%
    \put(0.8,0.035){\color[rgb]{0,0,0}\makebox(0,0)[lb]{\smash{$p_{21}$}}}%
    \put(0.91,0.035){\color[rgb]{0,0,0}\makebox(0,0)[lb]{\smash{$p_{22}$}}}%
    \put(0.75715609,0.01039593){\color[rgb]{0,0,0}\makebox(0,0)[lb]{\smash{$Z_L$}}}%
    \put(0,0){\includegraphics[width=\unitlength,page=3]{Hanoi2inkscape.pdf}}%
    \put(0.78501752,0.3744849){\color[rgb]{0,0,0}\makebox(0,0)[lb]{\smash{$Z_L$}}}%
    \put(0.96,0.05){\color[rgb]{0,0,0}\makebox(0,0)[lb]{\smash{$Z_C$}}}%
    \put(0.54,0.05){\color[rgb]{0,0,0}\makebox(0,0)[lb]{\smash{$Z_C$}}}%
    \put(0.81,0.26){\color[rgb]{0,0,0}\makebox(0,0)[lb]{\smash{$rZ_2$}}}%
  \end{picture}%
\endgroup%
}
\caption{Variant Hanoi-type circuit construction.}\label{fig-Hanoi2}
\end{figure} 

As with our first Hanoi circuit, we compute the impedance from the top vertex to the bottom pair of vertices, and between the two bottom vertices, and equate the results for the left and right diagrams.  The result is the two equations
\begin{gather}
Z_1 +\frac{1}{2} Z_2 = Z_L + r Z_1 + \frac{1}{2}( r Z_1 + 2r Z_2 + 2 Z_C ), \quad\text{and}\label{C1}\\
2 Z_2 = 2 r Z_2 + 2 Z_C + \bigg( \frac{1}{2rZ_2 + Z_L} + \frac{1}{2r ( Z_1 + Z_2) + 2Z_C} \bigg)^{-1}. \label{C2}
\end{gather}
We may reorganize this to get a quadratic for $Z_2$ and a formula for $Z_1$ in terms of $Z_2$.

\begin{lemma}
If $r\neq\frac{2}{3}$, then
\begin{align}
	Z_1= \frac{2r-1}{2-3r} Z_2 + \frac{2}{2 - 3r}(Z_C + Z_L), \quad\text{and} \label{C3}\\
	Z_2^2 \bigl( 2r(1-r)(5r-3)\bigr) +2 Z_2 \bigl( 2(1-r)(3r-1)Z_C +& (2r-1)(r+1)Z_L \bigr) \notag\\ &\quad + 2(Z_L+Z_C)\bigl( (2-r)Z_C+rZ_L  \bigr) =0; \label{C6}
	\end{align}
while if $r=\frac{2}{3}$, then one can  solve explicitly for $Z_1$ and $Z_2$, and the impedances are purely imaginary.
\end{lemma}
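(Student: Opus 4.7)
The argument parallels the proof of Lemma~\ref{HanoiIsolnsforZ}, with (\ref{C1}) and (\ref{C2}) playing the roles of (\ref{eq:1}) and (\ref{eq:2}). Equation (\ref{C1}) is linear in $Z_1$ and $Z_2$; expanding its right-hand side and collecting terms yields
\begin{equation*}
(2-3r)Z_1 + (1-2r)Z_2 = 2(Z_L+Z_C),
\end{equation*}
from which (\ref{C3}) follows by solving for $Z_1$ whenever $r\neq\tfrac{2}{3}$.

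To derive the quadratic (\ref{C6}), I would substitute (\ref{C3}) into (\ref{C2}). A key intermediate identity, the analog of (\ref{eq:1''}) in the earlier proof, is
\begin{equation*}
2r(Z_1+Z_2) + 2Z_C = \frac{2}{2-3r}\bigl(r(1-r)Z_2 + (2-r)Z_C + 2rZ_L\bigr),
\end{equation*}
obtained by inserting (\ref{C3}) and simplifying. Setting $A=2rZ_2+Z_L$ and $B=2r(Z_1+Z_2)+2Z_C$, equation (\ref{C2}) reads $2\bigl((1-r)Z_2-Z_C\bigr)(A+B) = AB$. Expanding and isolating $B$ on the right gives $2\bigl((1-r)Z_2-Z_C\bigr)A = B\bigl(2(2r-1)Z_2 + Z_L + 2Z_C\bigr)$. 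Substituting the identity for $B$ and clearing the factor $2/(2-3r)$ then yields
\begin{equation*}
(2-3r)\bigl((1-r)Z_2-Z_C\bigr)(2rZ_2+Z_L) = \bigl(r(1-r)Z_2+(2-r)Z_C+2rZ_L\bigr)\bigl(2(2r-1)Z_2+Z_L+2Z_C\bigr),
\end{equation*}
which on expansion is a quadratic in $Z_2$. The four coefficients should factor into the forms appearing in (\ref{C6}); for instance the leading coefficient simplifies via $2r(1-r)\bigl((2r-1)-(2-3r)\bigr) = 2r(1-r)(5r-3)$, and the remaining three arise from analogous factorizations of polynomials of degree at most two in $r$. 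This algebraic bookkeeping is the main (and essentially the only) obstacle.

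For the degenerate case $r=\tfrac{2}{3}$, the $Z_1$ terms in (\ref{C1}) cancel, reducing it to $-\tfrac{1}{6}Z_2 = Z_L+Z_C$ and forcing $Z_2 = -6(Z_L+Z_C)$. Since $Z_L=i\omega L$ and $Z_C=1/(i\omega C)$ are purely imaginary, so is $Z_2$. Substituting this value into (\ref{C2}) reduces that equation to a single linear equation in $Z_1$ whose coefficients are purely imaginary expressions in $Z_L$ and $Z_C$, so $Z_1$ may be solved for explicitly and is itself purely imaginary, completing the lemma.
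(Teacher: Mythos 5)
Your derivation is correct and follows essentially the same route as the paper's: obtain \eqref{C3} from the linear equation \eqref{C1}, rewrite the combined impedance $2r(Z_1+Z_2)+2Z_C$ (your intermediate identity is precisely the paper's \eqref{C5} multiplied by $2r$ and shifted by $2Z_C$), substitute into \eqref{C2}, and clear denominators to reach the quadratic; I verified that all four coefficients of your expanded equation match \eqref{C6}. The one slip is in your final sentence on the case $r=\tfrac{2}{3}$: a linear equation $\alpha Z_1=\beta$ with \emph{both} coefficients purely imaginary would force $Z_1$ to be \emph{real}, so the reason you give does not yield the stated conclusion. What actually happens is that after clearing denominators the equation for $Z_1$ has the form $(\text{purely imaginary})\cdot Z_1=(\text{real})$, the real right-hand side arising as a product of two purely imaginary quantities; this is why the explicit solution $Z_1=-3(Z_C+Z_L)(9Z_C+Z_L)/(2Z_C+3Z_L)$ is purely imaginary.
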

\begin{proof}
With $r\neq\frac{2}{3}$, we obtain~\eqref{C3} from~\eqref{C1} by elementary algebra.  It is also convenient to compute
\begin{equation}\label{C5}
	Z_1 +Z_2= \frac{1-r}{2-3r} Z_2 + \frac{2}{2 - 3r}(Z_C + Z_L).
	\end{equation}
Using~\eqref{C5} to replace $Z_1+Z_2$ in the following version of~\eqref{C2}:
\begin{equation*}
	\frac{1}{2(1-r)Z_2 -2Z_C} = \frac{1}{2rZ_2 + Z_L} + \frac{1}{2r ( Z_1 + Z_2) + 2Z_C},
	\end{equation*}
we obtain an expression that reduces to the given quadratic.

In the case $r=\frac{2}{3}$, we instead obtain $Z_2=-6(Z_C+Z_L)$ from~\eqref{C1}, and using~\eqref{C5} then gives
\begin{equation*}
	\frac{1}{18Z_C+12Z_L} = \frac{1}{24Z_C+21Z_L} + \frac{1}{18Z_C+24Z_L-4Z_1},
	\end{equation*}
from which $Z_1=-3(Z_C+Z_L)(9Z_C+Z_L)/(2Z_C+3Z_L)$ is purely imaginary.
\end{proof}

\begin{theorem}\label{thm:HanoiIIfilter}
For $r\in\bigl(0,\frac{1}{2}\bigr)\cup\bigl(\frac{1}{2},\frac{3}{5}\bigr)$,  there are frequencies for which the Hanoi~II circuit  is a filter.  With $Z_C=\frac{1}{i\omega C}$ and $Z_L=i\omega L$, the frequency range is given by the following condition on $\omega^2 LC$:
\begin{equation}\label{Hanoi2conditforfilter}
	\begin{split} \bigl(-24r^4+28r^3-9r^2+2r-1\bigr)(\omega^2LC)^2  &-4(r-1)^2\bigl( 6r^2-3r-1\bigr)(\omega^2LC) \\
	&\quad +4(1-r)\bigl( 4r^3-2r^2+r-1  \bigr) >0, \end{split}
	\end{equation}
which is graphed in Figure~\ref{Hanoi2filtergraph}.  In particular, if $\omega^2 LC=2$, then~\eqref{Hanoi2conditforfilter} is valid for all $r\in\bigl(0,\frac35\bigr)\setminus\bigl\{\frac{1}{2}\bigr\}$.
\end{theorem}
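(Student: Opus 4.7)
The plan is to leverage the quadratic equation \eqref{C6} for $Z_2$ derived in the preceding lemma, combined with the observation that $Z_C$ and $Z_L$ are purely imaginary. Writing \eqref{C6} as $AZ_2^2+BZ_2+C_0=0$, one checks that $A=2r(1-r)(5r-3)$ is real; $B=2\bigl(2(1-r)(3r-1)Z_C+(2r-1)(r+1)Z_L\bigr)$ is a real combination of two purely imaginary quantities and hence purely imaginary; and $C_0=2(Z_L+Z_C)\bigl((2-r)Z_C+rZ_L\bigr)$ is a product of two purely imaginary numbers and hence real. Consequently the two roots $Z_2$ have purely imaginary sum and real product, so writing $B=i\hat{B}$ with $\hat B\in\mathbb{R}$ they take the form
\begin{equation*}
Z_2=\frac{-i\hat B\pm\sqrt{-\hat B^2-4AC_0}}{2A}.
\end{equation*}
Both roots are purely imaginary when $-\hat B^2-4AC_0\le 0$, and at least one root has strictly positive real part precisely when $\hat B^2+4AC_0<0$. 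This scalar inequality is therefore the condition for a characteristic $Z_2$ with positive real part.

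The first main step is to expand $\hat B^2+4AC_0$ as a polynomial in $\Omega:=\omega^2LC$, using $Z_CZ_L=L/C$, $Z_L^2=-\omega^2L^2$, and $Z_C^2=-1/(\omega^2C^2)$. A common positive factor of $4L/(C\Omega)$ pulls out, leaving a quadratic in $\Omega$; multiplying this bracketed quadratic by $-1$ turns the condition $\hat B^2+4AC_0<0$ into precisely the inequality~\eqref{Hanoi2conditforfilter}.

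For the special case $\omega^2LC=2$ I would substitute $\Omega=2$ into the left-hand side of~\eqref{Hanoi2conditforfilter} and collect powers of $r$. The $r^0$ terms cancel, and what remains factors as $-8r\bigl(20r^3-32r^2+17r-3\bigr)$. Noticing that both $r=\tfrac12$ and $r=\tfrac35$ are roots of the cubic, the factorization $20r^3-32r^2+17r-3=(5r-3)(2r-1)^2$ gives the value $8r(3-5r)(2r-1)^2$ at $\Omega=2$, which is strictly positive for $r\in(0,\tfrac35)\setminus\{\tfrac12\}$. This establishes the final assertion of the theorem.

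Finally I would verify that the physically measurable impedances $2Z_2$ and $Z_1+Z_2$ (rather than $Z_1$ and $Z_2$ individually, which arise from a $\Delta$-Y transform) inherit positive real part from $Z_2$. This is immediate for $2Z_2$. For $Z_1+Z_2$, I would invoke identity~\eqref{C5}: since $Z_C+Z_L$ is purely imaginary and the coefficient $(1-r)/(2-3r)$ is strictly positive on $r\in(0,\tfrac35)\subset(0,\tfrac23)$, taking real parts yields $\mathrm{Re}(Z_1+Z_2)=\tfrac{1-r}{2-3r}\mathrm{Re}(Z_2)>0$. The only nontrivial obstacle is the algebraic expansion in the first main step; the conceptual point that makes it tractable is the clean real/imaginary separation of the coefficients of~\eqref{C6}, which reduces an apparently two-dimensional problem (two complex measurable impedances) to a single scalar discriminant inequality in~$\Omega$.
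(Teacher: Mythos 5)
Your proposal is correct and follows essentially the same route as the paper: both reduce the filter condition to the discriminant inequality for the quadratic~\eqref{C6}, whose real quadratic/constant and purely imaginary linear coefficients yield~\eqref{Hanoi2conditforfilter} after extracting the positive factor $4/(\omega^2C^2)$, and both transfer positivity of $\mathrm{Re}(Z_2)$ to $\mathrm{Re}(Z_1+Z_2)$ via~\eqref{C5}. The only (harmless) divergence is that you deduce the existence of filter frequencies for $r\in\bigl(0,\frac35\bigr)\setminus\bigl\{\frac12\bigr\}$ directly from the $\omega^2LC=2$ computation, whereas the paper maximizes the quadratic in $\omega^2LC$ to show in addition that no such frequencies exist outside this range of $r$ --- a converse the theorem statement does not strictly require.
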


\begin{proof}
As was the case for the Hanoi~I circuit, it is necessary and sufficient that both $Z_2$ and $Z_1+Z_2$ have positive real part.  Moreover, from~\eqref{C5} and the fact that $Z_C+Z_L$ is imaginary, it is apparent that the signs of the real parts of  $Z_2$ and $Z_1+Z_2$ are the same if and only if $r\not\in\bigl[\frac{2}{3},1\bigr]$.

The quadratic~\eqref{C6} for $Z_2$ is of the form $aZ_2^2 + b i Z_2 + c =0$  for some real numbers $a,b,c$.  This has a root with non-zero real part if and only if $-b^2-4ac>0$.  Substituting for $a,b,c$ gives a condition that is quartic in $r$ but only  quadratic in $\omega^2 LC$.  Specifically
\begin{align*}
	-b^2	&= \frac{-4}{\omega^2 C^2} \bigl( - 2(1-r)(3r-1)+(2r-1)(r+1)\omega^2 LC \bigr),\\
	-4ac 	&= \frac{16}{\omega^2 C^2} r(1-r)(5r-3) (\omega^2 LC - 1)(r\omega^2 LC+r-2),
	\end{align*}
so that $-b^2-4ac>0$ precisely when~\eqref{Hanoi2conditforfilter} holds. (Note that $\omega^2 LC\geq0$.)

In the expression on the left of~\eqref{Hanoi2conditforfilter}, both the constant coefficient and the coefficient of $(\omega^2 LC)^2$ are negative for all $r\not\in\left[\frac{2}{3},1\right]$,  while the coefficient of $\omega^2 LC$ is positive only when $(1-\sqrt{11/3})<4r<(1+\sqrt{11/3})$. The latter interval does not intersect $(1,\infty)$, so when $r>1$ there are no frequencies for which~\eqref{Hanoi2conditforfilter} holds, and the  circuit cannot be a filter.  However this interval contains $(0,2/3)$.  It follows from these observations that for $r\in(0,2/3)$ the expression has a maximum at a positive value of $\omega^2 LC$, and the maximum is
\begin{equation*}	
	4(1-r)( 4r^3-2r^2+r-1 )  - \frac{4(r-1)^4( 6r^2-3r-1)^2}{(-24r^4+28r^3-9r^2+2r-1)},
	\end{equation*}
which is positive for an $r\in(0,2/3)$ if and only if
\begin{align*}
	0 &> ( 4r^3-2r^2+r-1 ) (-24r^4+28r^3-9r^2+2r-1)  + (r-1)^3 ( 6r^2-3r-1)^2\\
	&= -4r(3r-2)(5r-3)(r+1)^2(r-\frac{1}{2})^2,
	\end{align*}
from which we determine that $r\in(0,\frac{3}{5})$ and $r\neq\frac{1}{2}$ is necessary and sufficient for the existence of a frequency $\omega$ such that the filter condition applies.  To see that one such frequency is $\omega^2 LC=2$, one may substitute this value into~\eqref{Hanoi2conditforfilter}, at which point the inequality reduces to $-8r(2r-1)^2 (5r-3)>0$, which is true for the given values of $r$.
\end{proof}

\begin{figure}[htb]
\centering
\includegraphics[width=0.7\textwidth]{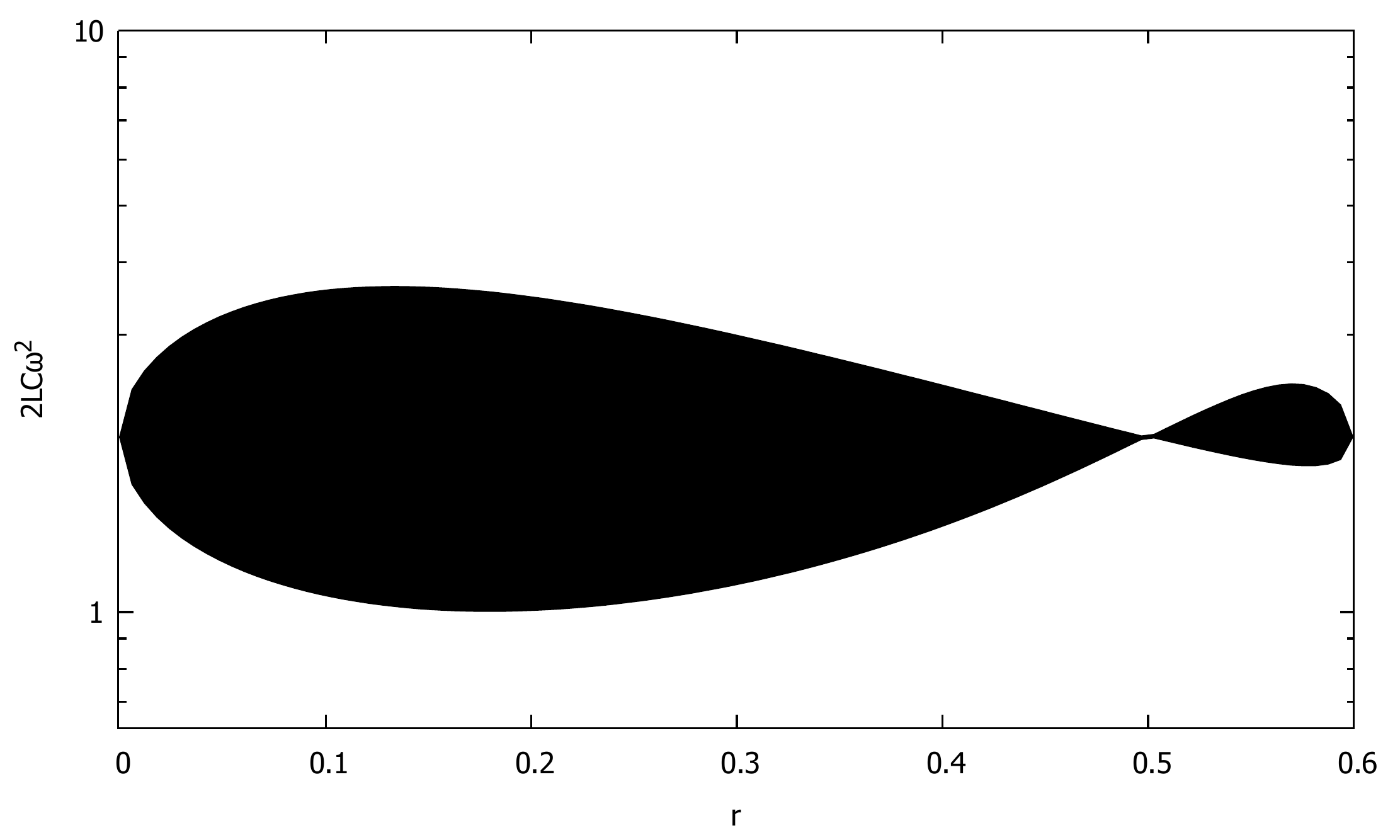}
\caption{The Hanoi Circuit II is a filter in the shaded region.}\label{Hanoi2filtergraph}
\end{figure}

\subsection*{Convergence of finite approximations}
We do not know whether the sequence of finite approximating circuits, or a sequence of circuits in which there are small resistors in series with the imaginary impedances, converges to our circuit on the Hanoi attractor.  The same issues that made this difficult in the case of the Hanoi~I circuit arise in the case of the Hanoi~II circuit.

\subsection*{Harmonic functions}
We may find matrices  that can be used to compute harmonic functions in a manner similar to that used for the Hanoi~I circuit. For this purpose, we label the vertices of the graph in the same way as was used for the Hanoi~I circuit in Figure~\ref{fig-Hanoi1impedances},  so the external vertices are $p_0,p_1,p_2$, and the central vertices of each copy of the inverted Y-circuit are $q_0,q_1,q_2$.

\begin{lemma}\label{lemma:hanoiharmonic} If the values of a harmonic function at $p_0, p_1, p_2$ are expressed as a column vector, then the vector of values at $q_0, q_1, q_2$ may be obtained by multiplying by a matrix with eigenvalues $\lambda_j$ and eigenvectors $v_j$ as follows:  
\begin{align*}
	\lambda_0 & = 1, && v_0 = \begin{pmatrix} 1 & 1 & 1 \end{pmatrix}^T;\\
	\lambda_- &= 1-r- \frac{Z_C}{Z_2},  &&  v_- = \begin{pmatrix} 0 & 1 & -1 \end{pmatrix}^T;\\
	\lambda_+ &=\frac{rZ_1 + rZ_2 +Z_C}{ 2Z_1 + Z_2},    &&  v_+ = \begin{pmatrix} 2 (rZ_1 + Z_L)/(rZ_2 +Z_C) & -1 & -1 \end{pmatrix}^T.
	\end{align*}
\end{lemma}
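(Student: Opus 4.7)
The plan is to exploit two structural simplifications. First, the circuit is symmetric under the reflection fixing $p_0, q_0$ and interchanging $(p_1,p_2)$ and $(q_1,q_2)$, so the $3\times 3$ map splits into the trivial constant direction (whence $\lambda_0 = 1$ with $v_0 = (1,1,1)^T$) plus a one-dimensional antisymmetric and a one-dimensional symmetric subspace. Second, every intermediate vertex $p_{ij}$ has valence two, so each arm between two of $\{q_0,q_1,q_2,p_1,p_2\}$ contracts to a single series impedance. I would record these once: $r(Z_1+Z_2)+Z_C$ along each slanted side of the central hexagon (e.g.\ $q_0$ to $q_1$), $2rZ_2+Z_L$ along the bottom ($q_1$ to $q_2$), $rZ_2+Z_C$ from each lower external $p_i$ to $q_i$, and $rZ_1+Z_L$ from $p_0$ to $q_0$.

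For the antisymmetric eigenvector, I would set $(h(p_0),h(p_1),h(p_2))=(0,1,-1)$, so antisymmetry forces $h(q_0)=0$ and $h(q_1)=-h(q_2)=\lambda_-$. Summing the three currents leaving $q_1$ through the contracted arms and applying Kirchhoff gives
\[
\lambda_-\biggl(\frac{1}{r(Z_1+Z_2)+Z_C}+\frac{1}{rZ_2+Z_C}+\frac{2}{2rZ_2+Z_L}\biggr)=\frac{1}{rZ_2+Z_C}.
\]
The key algebraic step is to rearrange~\eqref{C2} into the form
\[
\frac{1}{r(Z_1+Z_2)+Z_C}+\frac{2}{2rZ_2+Z_L}=\frac{1}{(1-r)Z_2-Z_C},
\]
which collapses two of the three admittances. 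Combining the result with $(rZ_2+Z_C)^{-1}$ over a common denominator produces $Z_2 / [((1-r)Z_2-Z_C)(rZ_2+Z_C)]$; inverting yields $\lambda_- = ((1-r)Z_2 - Z_C)/Z_2 = 1 - r - Z_C/Z_2$.

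For the symmetric eigenvector $(a,-1,-1)^T$, I would use $h(q_1)=h(q_2)$ to kill all current across the bottom $Z_L$ and the vertical axis of symmetry, so that (after identifying $p_1\sim p_2$ and $q_1\sim q_2$) the circuit collapses to a three-segment line from $p_0$ to $p_1=p_2$ with series impedances $R_1=rZ_1+Z_L$, $R_2=(rZ_1+rZ_2+Z_C)/2$, $R_3=(rZ_2+Z_C)/2$. With $h(q_0)=\lambda_+ a$ and $h(q_1)=-\lambda_+$, constancy of the line current compared via $R_1$ and via $R_3$ forces $a=R_1/R_3=2(rZ_1+Z_L)/(rZ_2+Z_C)$, and then comparison via $R_2$ and $R_3$ gives $\lambda_+ = R_2/(R_1+R_2+R_3)$. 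The final identification $2(R_1+R_2+R_3)=2Z_1+Z_2$ is just twice equation~\eqref{C1}, producing the stated $\lambda_+ = (rZ_1+rZ_2+Z_C)/(2Z_1+Z_2)$.

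The main obstacle is purely algebraic bookkeeping: the Kirchhoff sum at $q_1$ and the three voltage drops on the collapsed line produce unwieldy expressions, and one needs to recognize that the characteristic-impedance equations~\eqref{C1} and~\eqref{C2} imposed to construct the self-similar circuit in the first place reappear exactly as the identities required to simplify them. Once these substitutions are spotted, each of the two eigenvalues drops out in essentially one line.
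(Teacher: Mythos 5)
Your proposal is correct and follows essentially the same route as the paper's proof: the symmetric/antisymmetric decomposition, the Kirchhoff current balance at $q_1$ simplified by the rearranged form of~\eqref{C2} to get $\lambda_-$, and the collapse to a three-segment line (with series impedances $rZ_1+Z_L$, $(rZ_1+rZ_2+Z_C)/2$, $(rZ_2+Z_C)/2$) together with~\eqref{C1} to get $\lambda_+$ and the first entry of $v_+$. The only difference is expository: you make explicit the contracted arm impedances and the identity $\frac{1}{r(Z_1+Z_2)+Z_C}+\frac{2}{2rZ_2+Z_L}=\frac{1}{(1-r)Z_2-Z_C}$, which the paper invokes implicitly by ``applying~\eqref{C2}.''
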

\begin{proof}
We proceed as in Lemma~\ref{HanoiIharmoniconelevel}. The first eigenvalue and eigenvector are obvious, and we need only consider eigenvectors that are symmetric and antisymmetric under the reflection fixing $p_0$ and exchanging $p_1$ and $p_2$.  The antisymmetric eigenvector $v_-$ has voltage $(1-\lambda_-)$ across the arm from $p_1$ to $q_1$, $\lambda_-$ across the slanted sides of the triangle, and $2 \lambda_-$ across the base of the triangle.  Summing currents at $q_1$ yields
\begin{align*}
\frac{1-\lambda_-}{rZ_2 + Z_C} &= \frac{\lambda_-}{rZ_1 + rZ_2 + Z_C} + \frac{2\lambda_-}{2rZ_2 + Z_L}, \\
\frac{1}{rZ_2 + Z_C} &= \lambda_- \bigg( \frac{1}{rZ_2 +Z_C} + \frac{1}{(1-r)Z_2 -Z_C} \bigg), 
\end{align*}
where the second equation is obtained by applying~\eqref{C2} to the first.  Thus $\lambda_- = 1-r- \frac{Z_C}{Z_2}$.  For the symmetric eigenvector, $p_1$ may be identified with $p_2$, and $q_1$ with $q_2$.  Thus our circuit becomes a line with impedances $rZ_1 + Z_L$, $(rZ_1 +rZ_2+Z_C)/2$ and $(rZ_2 +Z_C)/2$.  The voltage from $q_0$ to the pair $q_1,q_2$ is proportional to that from $p_0$ to $p_1,p_2$ with proportionality constant 
\begin{equation*}
	\frac{rZ_1 + rZ_2 +Z_C}{3rZ_1 + 2rZ_2 + 2Z_C + 2Z_L}= \frac{rZ_1 + rZ_2 +Z_C}{ 2Z_1 + Z_2} = \lambda_+,
	\end{equation*}
where we simplified using~\eqref{C1}.  The corresponding eigenvector is of the form $(a,-1,-1)$, so if we write the current through the line as $I$, we have voltage $I (rZ_1 + Z_L) = (1-\lambda_+) a$ from $p_0$ to $q_0$, and $I (rZ_2 +Z_C)/2 = (1-\lambda_+)$ from $q_1$ to $p_1$.  Eliminating $I$ gives $a = 2 (rZ_1 + Z_L)/(rZ_2 +Z_C)$.  
\end{proof}

By computing values of harmonic functions using this lemma, we can then compute the harmonic interpolation matrices for the Hanoi~II circuit.

\begin{theorem}
Let $M_j$ denote the harmonic interpolation matrix from the values at $(p_0,p_1,p_2)$ to those at $(p_{j0},p_{j1},p_{j2})$, with point labelled as in Figure~\ref{fig-Hanoi2}. Then
\begin{align*}
	M_0
	&= \begin{bmatrix}
	1-2Z_L/b& Z_L/b & Z_L/b\\
	(c+Z_C)/b&r/2+Z_L/b + rd/2c&r/2+Z_L/b - rd/2c \\
	(c+Z_C)/b&r/2+Z_L/b-  rd/2c&r/2+Z_L/b + rd/2c
	\end{bmatrix},
	\\
	M_1
	&= \begin{bmatrix}
	c/b&c/2b+d(Z_2-d)/2cZ_2 &  c/2b- d(Z_2-d)/2cZ_2\\
	Z_C/b & 1-Z_C(Z_1+Z_2)/bZ_2& Z_1Z_C/bZ_2 \\
	(rZ_2+Z_C)/b& (b+d-Z_2)/2b + dZ_L/2Z_2(2rZ_2+Z_L) & (b+d -Z_2)/2b - dZ_L/2Z_2(2rZ_2+Z_L) 
	\end{bmatrix},\\
	M_2
	&= \begin{bmatrix}
	c/b& c/2b- d(Z_2-d)/2cZ_2& c/2b+d(Z_2-d)/2cZ_2\\
	(rZ_2+Z_C)/b& (b+d -Z_2)/2b - dZ_L/2Z_2(2rZ_2+Z_L) &(b+d-Z_2)/2b + dZ_L/2Z_2(2rZ_2+Z_L)\\
	Z_C/b & Z_1Z_C/bZ_2  &  1-Z_C(Z_1+Z_2)/bZ_2
	\end{bmatrix},
	\end{align*}
where $b=2Z_1+Z_2$,  $c=rZ_1+rZ_2+Z_C$, and $d=(1-r)Z_2-Z_C$.
\end{theorem}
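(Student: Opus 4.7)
The plan is to follow the scheme used for the Hanoi~I harmonic interpolation matrices. Given Lemma~\ref{lemma:hanoiharmonic}, we already know how to compute $h(q_0), h(q_1), h(q_2)$ from the boundary data at $p_0, p_1, p_2$. To extend this to the remaining interior vertices, note that each $p_{jk}$ has degree two in the Hanoi~II graph and lies on a simple series path of imaginary impedances: $p_{00}$ sits on $p_0$--$Z_L$--$p_{00}$--$rZ_1$--$q_0$; $p_{11}$ lies on $p_1$--$Z_C$--$p_{11}$--$rZ_2$--$q_1$ (and $p_{22}$ symmetrically); the pair $(p_{01}, p_{10})$ lies on the segment $q_0$--$rZ_2$--$p_{01}$--$Z_C$--$p_{10}$--$rZ_1$--$q_1$ of total impedance $c = rZ_1 + rZ_2 + Z_C$, and $(p_{02}, p_{20})$ symmetrically; finally $(p_{12}, p_{21})$ lies on $q_1$--$rZ_2$--$p_{12}$--$Z_L$--$p_{21}$--$rZ_2$--$q_2$ of total impedance $2rZ_2 + Z_L$. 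Along any such series path harmonicity forces a linear voltage profile, so each intermediate value is a weighted average of the endpoint values with weights determined by the series impedances.

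I would work in the eigenbasis $\{v_0, v_+, v_-\}$ of Lemma~\ref{lemma:hanoiharmonic} to keep expressions manageable. Let $h_0, h_+, h_-$ denote the corresponding harmonic functions. Then $h_0 \equiv 1$ everywhere; the antisymmetric $h_-$ vanishes on the axis of symmetry through $p_0$ (so $h_-(p_0) = h_-(p_{00}) = h_-(q_0) = 0$) and satisfies $h_-(q_1) = \lambda_- = -h_-(q_2)$, while the symmetric $h_+$ takes values $h_+(q_0) = \lambda_+ a_+$ and $h_+(q_1) = h_+(q_2) = -\lambda_+$, where $a_+ = 2(rZ_1 + Z_L)/(rZ_2 + Z_C)$ and $\lambda_+ = c/b$. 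Running the weighted-average interpolation described above along each of the six segments yields the value of each of $h_0, h_+, h_-$ at each $p_{jk}$.

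With these nine values tabulated, the columns of each $M_j$ are obtained by expressing the standard basis vectors $e_1, e_2, e_3$ as linear combinations of $v_0, v_+, v_-$ and taking the matching linear combinations of the values of $h_0, h_+, h_-$ at $p_{j0}, p_{j1}, p_{j2}$. This mirrors exactly the construction of the auxiliary functions $g_0, g_\pm$ used in the Hanoi~I proof, and the resulting entries can then be read off column by column.

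The main obstacle will be the final algebraic simplification into the compact form involving $b = 2Z_1 + Z_2$, $c = rZ_1 + rZ_2 + Z_C$ and $d = (1-r)Z_2 - Z_C$. The raw interpolation produces expressions in all five quantities $r, Z_1, Z_2, Z_C, Z_L$, and collapsing them requires repeatedly applying the characteristic-impedance identities~\eqref{C1}, \eqref{C3}, \eqref{C5} and the quadratic~\eqref{C6}, together with the eigenvalue formulas in Lemma~\ref{lemma:hanoiharmonic}. For example, reducing the $(1,2)$ entry of $M_0$ to the stated $r(Z_1+Z_2)/b$ requires using~\eqref{C1} to absorb the $Z_L$ contribution coming from interpolation across $p_0$--$p_{00}$--$q_0$, and recognising $r(Z_1+Z_2)/b$ as what remains after cancellation. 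These manipulations are routine but lengthy, and no new ideas enter beyond careful bookkeeping once the interpolation framework is in place.
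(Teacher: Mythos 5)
Your proposal follows essentially the same route as the paper's proof: the paper likewise works in the eigenbasis of Lemma~\ref{lemma:hanoiharmonic}, defines $g_0=(1+h_+)/(a+1)$ and $g_\pm=\tfrac12(1-g_0\pm h_-)$ to realize the standard basis vectors, computes their values at the $p_{jk}$ by linear interpolation along the series segments (using the total line impedance $b/2$ to get the current for $g_0$), and reads off the columns of the $M_j$. The only slip is in your illustrative aside: the $(1,2)$ entry of $M_0$ here is $Z_L/b$, not $r(Z_1+Z_2)/b$ (that value belongs to the Hanoi~I matrix), but this does not affect the method.
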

\begin{proof}
As in Lemma~\ref{lemma:hanoiharmonic} we define harmonic functions $h_\pm$ with boundary data $v_\pm$. 
Recall that for the symmetric eigenfunction $h_+$ the circuit is equivalent to a line segment. Consider $g_0=(1+h_+)/(a+1)$ the harmonic function with boundary values $(1,0,0)^T$, where  $a=2(rZ_1+Z_L)/(rZ_2+Z_C)$ as in the proof of Lemma~\ref{lemma:hanoiharmonic}.  Since the values of $g_0$ go from $0$ to $1$ along the line, the current is the reciprocal of the total resistance $(2Z_1+Z_2)/2$, and we can compute the voltages directly from this and the resistances.  Putting $b=2Z_1+Z_2$ as before this gives $g_0(p_{11})=g_0(p_{22})=Z_c/b$, $g_0(p_{12})=g_0(p_{21})=g_0(q_1)=(rZ_2+Z_C)/b=(Z_2-d)/b$, $g_0(p_{10})=g_0(p_{20})=c/b$, $g_0(p_{01})=g_0(p_{02})=(c+Z_C)/b$, $g_0(p_{00})=1-2Z_L/b$.

Now let $g_\pm= \frac{1}{2}(1-g_0 \pm h_-)$.  We need the values of $h_-$ at the $p_{jk}$ vertices.  Symmetry gives $h_-(p_{00})=h_-(q_0)=0$, and we know $h_-(q_1)=-h_-(q_2)=d/Z_2$.  Then $h_-(p_{01})=-h_-(p_{02})= rd/c$, $h_-(p_{10})=-h_-(p_{20})= d(Z_2-d)/cZ_2$, $h_-(p_{11})=-h_-(p_{22})= 1-Z_C/Z_2$, and $h_-(p_{12}) = -h_-(p_{21})= dZ_L/Z_2(2rZ_2+Z_L)$.  From these and the values of $g_0$ we can fill the matrix entries in the following expressions to complete the proof.
\begin{align*}
	M_0
	&=\begin{bmatrix}
	g_0(p_{00}) & g_+(p_{00}) & g_-(p_{00}) \\
	g_0(p_{01}) & g_+(p_{01}) & g_-(p_{01}) \\
	g_0(p_{02}) & g_+(p_{02}) & g_-(p_{02}) 
	\end{bmatrix},\\
	M_1
	&=\begin{bmatrix}
	g_0(p_{10}) & g_+(p_{10}) & g_-(p_{10}) \\
	g_0(p_{11}) & g_+(p_{11}) & g_-(p_{11}) \\
	g_0(p_{12}) & g_+(p_{12}) & g_-(p_{12}) 
	\end{bmatrix},\\
	M_2
	&=\begin{bmatrix}
	g_0(p_{20}) & g_+(p_{20}) & g_-(p_{20}) \\
	g_0(p_{21}) & g_+(p_{21}) & g_-(p_{21}) \\
	g_0(p_{22}) & g_+(p_{22}) & g_-(p_{22})
	\end{bmatrix}.
	\end{align*}\qedhere
\end{proof}

\bibliographystyle{plain}
\bibliography{AC}

\vspace{20pt}

\end{document}